\documentclass[a4paper,12pt]{article}
\usepackage[latin1]{inputenc}
\usepackage[english]{babel}
\usepackage{amssymb}
\usepackage{makeidx}
\usepackage{graphicx}
\usepackage{mathrsfs}
\usepackage{dsfont}
%\usepackage[left=2cm,right=2cm,top=2cm,bottom=2cm]{geometry}

%\bibliography{diss01}
%
\setcounter{tocdepth}{4}
%\newline
\setcounter{secnumdepth}{4}

\usepackage{color}
\usepackage{ulem}
%\ifx\TeXtures\undefined%                   TeXtures (Mac) Platform
%\usepackage[psamsfonts]{amsfonts}%
\usepackage{amsfonts}
\usepackage{amsmath}
\allowdisplaybreaks
%\else%                                     other Platforms
%\usepackage{amsmath}%
%\usepackage{amsfonts}%
\usepackage{times}
%\fi%                                       common packages
\usepackage{amsthm}
\usepackage{epsfig}
\typeout{TransFig: figure text in LaTeX.}
\typeout{TransFig: figures in PostScript.}
%
%%%%%%%%%%%%%%%%%%%%%%%%%%%%%%%%%%%%%%%%%%%%%%%%%%%%%%%%%%%%%%%%%%%%%%%%%
%\newcommand\vb[1]{\color{black}{#1}\marginpar{\textcolor{black}{VB}}}  
%%%%%%%%%%%%%%%%%%%%%%%%%%%%%%%%%%%%%%%%%%%%%%%%%%%%%%%%%%%%%%%%%%%%%%%%%
%
%%%%%%%%%%%%%%%%%%%%%%%%%%%%%%%%%%%%%%%%%%%%%%%%%%%%%%%%%%%%%%%%%%%%%%%%%
\newcommand{\DATUM}{}              %                        %
\pagestyle{myheadings}                         % Date and Page Headings %
\markboth{\hfill{ \DATUM}}{{ \DATUM}\hfill}%             %
%%%%%%%%%%%%%%%%%%%%%%%%%%%%%%%%%%%%%%%%%%%%%%%%%%%%%%%%%%%%%%%%%%%%%%%%%
%
%                         %%%%%%%%%%%%%%%%%%%%%%%%%%%%%%%%%%%%%%%%%%%%%%%%
\newcommand{\change}%     % Puts Hash Mark # on the margin to indicate a %
{{\marginpar{\#}}}        % change or mistake in the manuscript.         %
%                         %%%%%%%%%%%%%%%%%%%%%%%%%%%%%%%%%%%%%%%%%%%%%%%%
%                                      %%%%%%%%%%%%%%%%%%%%%%%%%%%%%%%%%%%%
              % Comma and Period to be used to   %
             % end displayed equations.         %
%                                      %%%%%%%%%%%%%%%%%%%%%%%%%%%%%%%%%%%%
%
        %%%%%%%%%%%%%%%%%%%%%%%%%%%%%%%%%%%
\newcommand{\vphi}{{\varphi}}           % Frequently used abbreviations   %
                %%%%%%%%%%%%%%%%%%%%%%%%%%%%%%%%%%%

\newcommand{\la}{\langle}
\newcommand{\ra}{\rangle}

%

         %%%%%%%%%%%%%%%%%%%%%%
         %                    %
        %  Boldface symbols  %
         %                    %
         %%%%%%%%%%%%%%%%%%%%%%

\newcommand{\one}{{\bf 1}}

\newcommand{\cB}{{\mathcal{B}}}

\newcommand{\cF}{{\mathcal{F}}}

\newcommand{\cH}{{\mathcal{H}}}

         %%%%%%%%%%%%%%%%%%%%%%%%
         %                      %
\newcommand{\cN}{{\mathcal{N}}}         % Calligraphic Letters %
         %                      %
         %%%%%%%%%%%%%%%%%%%%%%%%

\newcommand{\cS}{{\mathcal{S}}}

%

            %%%%%%%%%%%%%%%%%%%%%%%%%%%
            %                         %
\newcommand{\NN}{\mathbb{N}}            % Blackboard Bold Letters %
\newcommand{\CC}{\mathbb{C}}            %                         %
            %%%%%%%%%%%%%%%%%%%%%%%%%%%
%
         %%% Gothic Letters %%%%%%%%
%

        %%%%%%%%%%%%%%%%%%%%%%%%%%%
        %                         %
        %    Overlined Letters    %
        %                         %
        %%%%%%%%%%%%%%%%%%%%%%%%%%%

%

        %%%%%%%%%%%%%%%%%%%%%%%%%%%
        %                         %
        %   Underlined Letters    %
        %                         %
        %%%%%%%%%%%%%%%%%%%%%%%%%%%

%

             %%%%%%%%%%%%%%%%%%%%%%%%%%%
             %                         %
             % Letters with Hat on Top %
%                         %
             %%%%%%%%%%%%%%%%%%%%%%%%%%%

%

         %%%%%%%%%%%%%%%%%%%%%%%%%%%%%
         %                           %
         % Letters with Tilde on Top %
         %                           %
             %%%%%%%%%%%%%%%%%%%%%%%%%%%%%

%
             %%%%%%%%%%%%%%%%%%%%%%%%%%%%%
             %                           %
             % Letters with Arrow on Top %
             %                           %
             %%%%%%%%%%%%%%%%%%%%%%%%%%%%%

%

%

               %%%%%%%%%%%%%%%%%%%%%%%%%%%%%%
              %                            %
              % Miscelleanous Math Symbols %
                %                            %
     %%%%%%%%%%%%%%%%%%%%%%%%%%%%%%

\newcommand{\cirS}{\mathop{\bigcirc\kern -.73em {\scriptstyle{\rm S}}}}

\newcommand{\Tr}{{\rm Tr}}

\newcommand{\at}{{at}}
\newcommand{\rad}{{\mathrm{ph}}}
\newcommand{\gs}{{\mathrm{gs}}}

\newcommand{\gap}{{\mathrm{gap}}}
%^{\sim}}}

\newcommand{\secret}[1]{}
%\renewcommand{\secret}[1]{#1}

          %%%%%%%%%%%%%%%%%%%%%%%%%%%
          % Gothic Letters          %
          %%%%%%%%%%%%%%%%%%%%%%%%%%%

%\newcommand{\fh}{\mathfrak{h}}

%

%                                                         %%%%%%%%%%%%%%%%%%
%                                                         %                %
%                                                         %                %
%                                                         %%%%%%%%%%%%%%%%%%
%
\renewcommand{\thesection}%            %%%%%%%%%%%%%%%%%%%%%%%%%%%%%%%%%%%%
{\arabic{section}}                      % To begin a section, write \secct %
\renewcommand{\theequation}%           % instead of \section. The section %
{\thesection.\arabic{equation}}        % number will then be displayed    %
              % the fourth section               %
%

           %%%%%%%%%%%%%%%%%%%%%%%%%%%%%%%%%%%%

\newtheorem{theorem}{Theorem}[section]         %%%%%%%%%%%%%%%%%%%%%%%%%%%%%%
\newtheorem{lemma}[theorem]{Lemma}             % These Theoremlike environm.%
\newtheorem{corollary}[theorem]{Corollary}     % are counted according to   %
   % the section where they     %
\newtheorem{remark}[theorem]{Remark}           % appear.                    %
\newtheorem{proposition}[theorem]{Proposition} %%%%%%%%%%%%%%%%%%%%%%%%%%%%%%
\theoremstyle{plain}
%

%Controversial

%\newcommand{\hel}{{H_\el}}
%\newcommand{\spec}{\sigma}               %                            %
%\newcommand{\acspec}{\sigma_\mathrm{ac}} %%%%%%%%%%%%%%%%%%%%%%%%%%%%%%
%\newcommand{\contspec}{\sigma_\mathrm{cont}}
%\newcommand{\pointspec}{\sigma_\mathrm{pp}}
%\newcommand{\discspec}{\sigma_\mathrm{disc}}
%\newcommand{\essspec}{\sigma_\mathrm{ess}}

\def\beq{\begin{equation}}
\def\ene{\end{equation}}

\hyphenation{eigenvector}

%%%%%%%%%%%%%%%%%%%%%%%%%%%%%%%%%%%%%%%%%%%%%%%%%%%%%%%%%%%%%%%%%%%%%%%%%%%%
\begin{document}
\bibliographystyle{plain}
%%%%%%%%%%%%%%%%%%%%%%%%%%%%%%%%%%%%%%%%%%%%%%%%%%%%%%%%%%%%%%%%%%%%%%%%%%%%
%\setcounter{page}{0}
%\thispagestyle{empty}

\title{Existence of Ground State Eigenvalues for the Spin-Boson Model with Critical Infrared Divergence and Multiscale Analysis}

\author{
Volker Bach \\
\small{Institut f\"ur Analysis und Algebra}\\
\small{Technische Universit\"at Braunschweig } \\[-1ex]
\small{Germany (v.bach@tu-bs.de)}
\and
  Miguel Ballesteros    \\
% \small{Departamento de F\'{\i}sica Matem\'atica.} \\
\small{ Instituto de Investigaciones en Matem\'aticas Aplicadas y en Sistemas (IIMAS).}\\
\small{  Universidad Nacional Aut\'onoma de M\'exico (UNAM)}\\[-1ex] 
\small{  Mexico (miguel.ballesteros@iimas.unam.mx)} 
\and 
Martin K\"onenberg  \\
\small{Institut f\"ur Analysis, Dynamik und Modellierung}\\
\small{Universit\"at Stuttgart }\\[-1ex]
\small{ Germany (martin.koenenberg@mathematik.uni-stuttgart.de)} 
\and 
Lars Menrath\\
\small{Institut f\"ur Analysis und Algebra}\\
\small{Technische Universit\"at Braunschweig } \\[-1ex]
\small{Germany (l.menrath@tu-bs.de)}
}
\date{\DATUM}

\maketitle

%\author{
%Volker Bach$^1$, \and
%Miguel Ballesteros$^2$, \and
%Martin  $^3$
%Lars Menrath$^4$ \\
%\vspace{2ex}
%                \small{  $^1$            
%                Institut f\"ur Analysis und Algebra,  \\
%                Technische Universit\"at Braunschweig, %\\
%%                38106 Braunschweig, Germany.\\
%%                {\tt v.bach@tu-braunschweig.de}\\
%%                $^2$
%%                Mathematical Physics Department,\\
%%                Instituto de Investigaciones en Matem\'aticas Aplicadas y Sistemas      (IIMAS),\\
%%                Universidad Nacional Aut\'onoma de M\'exico (UNAM),\\
%%                Apartado Postal 20-126,
%%                 San Angel 01000, Mexico D.F\\
%%                {\tt miguel.ballesteros@iimas.unam.mx}\\
%%                $^3$
%%                ?????,\\
%%                ????,\\
%%                ????.\\
%%                {\tt ??@??}\\
%%                $^4$
%%                Institut f\"ur Analysis und Algebra,\\
%%                Technische Universit\"at Braunschweig,\\
%%                38106 Braunschweig, Germany.\\
%                {\tt l.menrath@tu-braunschweig.de}
%                 }}

%\date{\DATUM}

%\maketitle

\begin{abstract} {\color{black} \secret{We study} A} two-level atom
  coupled to the radiation field {\color{black} is studied. First
    principles in physics suggest that the\secret{The}} coupling
  function, representing the interaction between the atom and the
  radiation field, behaves {\color{black}\secret{as $|k|^{-1/2}$ -- as
      $k$ tends to zero -- being $k$ the photon momentum} like
    $|k|^{-1/2}$, as the photon momentum $k$ tends to zero.\secret{This
      is the asymptotic behavior that can be derived from first
      principles in physics.}}  Previous results on non-existence of
  ground state eigenvalues suggest that in the most general case
  binding does not occur in the spin-boson model, i.e., the minimal
  energy of the atom-photon system is not an eigenvalue of the energy
  operator. {\color{black}\secret{However, further hypotheses on the
      coupling function allow proving existence of ground state
      eigenvalues.} Hasler and Herbst have shown
    \cite{HaslerHerbst2011}, however, that under the additional
    hypothesis that the coupling function be off-diagonal -which is
    customary to assume- binding does indeed occur.} In this paper
  {\color{black}\secret{we prove that, if} an alternative proof of
    binding in case of off-diagonal coupling is given, i.e., it is
    proven that,} if the coupling function is off-diagonal,
  {\color{black}\secret{ground state eigenvalues exist for} the ground
    state energy of} the spin-boson model {\color{black} is an
    eigenvalue of the Hamiltonian}. We develop a multiscale method
  that can be applied in the situation we study, identifying a new key
  symmetry operator which we use to demonstrate that the most singular
  terms appearing in the multiscale analysis vanish.
 \end{abstract}
%\newpage

\section{Introduction}

The precise description of nonrelativistic matter in interaction with
the quantized radiation field has been in the focus of mathematical
research ever since the proposal of the quantization of the radiation
field by Dirac more than eighty years ago \cite{Dirac1930}. 

The invention of the Laser some fifty years ago necessitated the
development of a simplified, yet, adequate model for the description
of its mechanism in theoretical physics. It proved useful to simplify
the model of matter from atom and molecules to two-level atoms. The
corresponding model, known as the \textit{spin-boson model}, became
the work horse of quantum optics and is nowadays of key importance for
quantum computing, with the interpretation of the two-level atom as a
qubit.

Starting more than twenty years ago, the mathematical aspects of the
models of nonrelativistic matter coupled to the quantized radiation
field -known as \textit{nonrelativistic quantum electrodynamics, NR
  QED}- were systematically investigated. In contrast to many models
from relativistic quantum mechanics or quantum field theory, the
models in NR QED are defined by a self-adjoint, semi-bounded
Hamiltonian $H = H^* \geq c > - \infty$ acting on the tensor product
$\cH = \cH_\at \otimes \cF$ of the Hilbert spaces $\cH_\at$ of
matter and $\cF$ of the radiation field, respectively. During the
past two decades or so, for many models of NR QED, basic spectral
properties like \textit{binding} and the \textit{existence of
  resonances} have been established. These represent the expected fate
of the eigenvalues of the atom as it is coupled to the radiation
field: The lowest spectral point persists to be an eigenvalue and all
other atomic eigenvalues are unstable and give rise to metastable
states, the resonances.

Specifically, \textit{binding} means that the infimum $E_\gs :=
\inf\sigma(H) > - \infty$ of the spectrum of the Hamiltonian is an
eigenvalue, called the ground state energy, with an eigenvector
$\vphi_\gs \in \cH$, called the ground state, i.e., $H \vphi_\gs =
E_\gs \vphi_\gs$. 

Binding in NR QED was established for atoms and molecules coupled to
the radiation field
\cite{BachFroehlichSigal1995,BachFroehlichSigal1998a}, as well as, for
the spin-boson model \cite{AraiHirokawa1997} about twenty years ago
under the assumption that the coupling function $G(k)$ is slightly
more regular, $|G(k)| \leq C |k|^{-\frac{1}{2} + \mu}$, for some $C <
\infty$ and $\mu >0$, in the infrared limit $k \to 0$, than what
derives from first principles in physics, namely, $|G(k)| \sim C
|k|^{-\frac{1}{2}}$, as $k \to 0$.

For these {\color{black} latter,} more singular models, with $|G(k)|
\sim C |k|^{-\frac{1}{2}}$, as $k \to 0$, binding was shown to hold
true a few years later \cite{BachFroehlichSigal1999} in the special,
but physically most relevant case that the radiation field is
minimally coupled to the electrons of the atom. Here, it was used that
the model possesses additional symmetries such as the $U(1)$-gauge
symmetry. The key identity (in the case of one electron, as for the
hydrogen atom) made use of in the proof is $\vec{v} = i[ H, \vec{x}
]$, where $\vec{v} = -i\vec{\nabla}_x - \vec{A}(\vec{x})$ is the
velocity operator and $\vec{x}$ the position operator of the electron.

Following an argument of Fr\"ohlich \cite{Froehlich1974} it was
assumed for many years \cite{AraiHirokawaHiroshima1999} that the
spin-boson model with singular coupling does not bind in the above
sense, but rather possesses a ground state that is revealed by a
(non-unitary) change of the representation of the canonical
commutation relations. In view of this common belief the recent proof
of Hasler and Herbst \cite{HaslerHerbst2011} for binding of the
spin-boson model with singular coupling is a remarkable result. Their
proof uses the renormalization group based on the isospectral
Feshbach-Schur map developed in
\cite{BachFroehlichSigal1998a,BachFroehlichSigal1998b,BachChenFroehlichSigal2003}. Their
additional key observation is that {\color{black} since} there is no
self-interaction of each of the two levels of the atom, but only a
coupling to one another, {\color{black}\secret{and that this fact induces
    a regularization of}} the (discrete) flow equation defined by the
renormalization group {\color{black} is more regular than it seems to be
  at first glance}.

In the present paper we give an alternative proof for binding of the
spin-boson model with singular coupling. We consider the spin-boson 
Hamiltonian 
\begin{align} \label{eq-intro-1}
 H  
\ := \
H_\at + H_\rad + \Phi(G) ,
\end{align}
where $H_\rad \equiv \one_\at \otimes H_\rad$ is the field Hamiltonian
and $H_\at = \sigma_3 + \one_\at \equiv H_\at \otimes \one_{\cF}$ is
the Hamiltonian of the two-level atom, with $\sigma_\nu$ denoting the
Pauli matrices. Furthermore, $\Phi(G)$ is the interaction with field
operator $\Phi(G) = a^*(G) + a(G)$, with $G \equiv g \sigma_1 \otimes
h(k)$, where $h$ is a compactly supported coupling function obeying
$|h(k)| \sim c |k|^{-\frac{1}{2}}$, as $k \to 0$, and $g \geq 0$ is the
coupling strength, see Eqs.~\eqref{eq.7}--\eqref{eq.5}. For this
Hamiltonian $H$ we prove our main result, Theorem~\ref{ESE},
which states that the infimum of its spectrum is an eigenvalue.

Our construction is based on \textit{Pizzo's method} \cite{Pizzo2003}, rather than the
renormalization group induced by the Feshbach-Schur map. That is, we
consider a sequence $H_n \equiv H( G_n)$ of regularized Hamiltonians
whose coupling functions $G_n(k) = \one( |k| \geq \rho_n) G(k)$ are the
restrictions of $G$ to photon momenta larger than $\rho_n = \kappa \,
\gamma^n$, for some fixed $\gamma <1$ and all $n \in \NN$. Following the
idea originally formulated by Pizzo, we inductively prove that each
$H_n$ shows binding with a ground state energy $E_n$ being a
non-degenerate eigenvalue with normalized eigenvector $\phi_n$ and
rank-one eigenprojection $P_n = |\phi_n\ra\la\phi_n|$.

It is fairly easy to establish the existence of these eigenprojections
$P_n$, for each $n$, and the principal difficulty of this and all
other such constructions lies in the proof of convergence $P_n \to
P_\gs $ of $P_n$ (here the range of $P_{\rm gs}$ consists of ground state eigenvectors of $H$). The additional property from which we derive
this convergence in this seemingly too singular case is that the
original Hamiltonian $H$, as well as, all Hamiltonian operators $H_n$,
commute with a symmetry {\color{black} $\cS = \sigma_3  (-\one)^{\cN_\rad}$, where
$\cN_\rad$} is the photon number operator. This symmetry induces a decomposition
of the Hilbert space into the two subspaces $\cH_\pm$ corresponding to
the eigenvalues $\pm 1$ of $\cS$. The operators $H$ and $H_n$ leave
these subspaces invariant, and from this we draw the consequence that
\begin{align} \label{eq-intro-2}
\Tr\{ P_n \, \sigma_1 \, P_n \} \ = \ 0 ,
\end{align}
for all $n \in \NN$,  which enters our proofs at key steps.

\subsection{The Model}
We study a two-level atom interacting with the radiation field. We assume, without loss of generality, that  the ground state (free) energy of the atom equals $0$ and the excited energy equals $2$. In this paper we only consider
non-degenerate energies. Therefore, the Hamiltonian of the atom alone is given by the matrix
\begin{align}
 H_{ \rm at}:=
 \begin{pmatrix}
 2 & 0 \\
 0 & 0
 \end{pmatrix}  \label{eq.6}
\end{align}
acting on the atom Hilbert space 
\begin{equation}
\mathcal{H}_{ \rm at} : = \mathbb{C}^2.
\end{equation}
For every Hilbert space $\mathfrak{h}$, we denote by 
\begin{align}
\mathcal{F}[\mathfrak{h}] := \mathbb{C} \oplus \bigoplus_{\ell=1}^{\infty} \mathcal{S}_\ell \bigotimes_{k=1}^{\ell} \mathfrak{h} \,  \label{eq.2}
\end{align}
its associated bosonic (symmetric) Fock space. $\Omega_{\mathfrak{h}} \in \mathcal{F}[\mathfrak{h}]$ denotes the vacuum vector. Here $\mathcal{S}_{\ell}$ denotes the orthogonal projection onto the subspace of totally symmetric tensors.  The Hilbert space for the radiation field is defined to be  
\begin{align}
\mathcal{F} \equiv  \mathcal{F}[L^{2}[\mathbb{R}^3]] \, .  \label{eq.2prima}
\end{align}
Note that it is not quite adequate to call the quanta of this scalar field {${\it photons} $, as polarization is not taken into account.  The free photon energy is given by the operator 
\begin{align}
 H_\rad \equiv  H_\rad(\omega) :=  d\Gamma(\omega) = \int_{\mathbb{R}^3} \omega(k) a^{*}(k) a(k) dk \, ,  \label{eq.3}
\end{align}
where $\omega(k):= \vert k \vert$, and $a^{*}(k)$, $a(k)$ denote the creation and annihilation operators representing the canonical commutation relations on $\mathcal{F}$, i.e.,
\begin{align}
\big[a(k), a^{*}(k') \big] =\delta(k-k') \, , \, \big[a(k), a(k') \big] =\big[a^{*}(k), a^{*}(k') \big]=0 \, , \, a(k)\Omega=0 \, ,  \label{eq.4}
\end{align}
for all $k, k' \in \mathbb{R}^3$, in the sense of operator-valued distributions. 
In Eq.~\eqref{eq.3} we use Nelson's notation for the second quantization $\Gamma(A)$ of a one-photon operator $A$. We furthermore introduce the photon number operator $ \mathcal{N}_\rad  $, defined on 
$\mathcal{H}$, by the following equation
\begin{align}\label{numberfull}
\mathcal{N}_\rad : =  H_\rad(  {\bf 1}_ { \mathbb{R}^3})
\equiv d \Gamma (   {\bf 1}_ { \mathbb{R}^3  }  ).
\end{align}
The Hilbert space of the (full) atom-photon system is the tensor product of the atom and the photon Hilbert spaces:
\begin{align}
\mathcal{H} := \mathcal{H}_{\rm at} \otimes \mathcal{F}. \label{eq.1}
\end{align}
The interaction between the atom and the photon field is expressed in terms of the field operator,
\begin{align} 
\Phi (G) := 
\int_{\mathbb{R}^3} [G(k) \otimes a^{*}(k) + G^*(k) \otimes a(k) ] dk , \label{eq.7}
\end{align}
where we assume that $G$ is of the form
\begin{align}
G(k) :=  g \frac{\Lambda(k)}{ 4 \pi \sqrt{\omega(k)}} f(k)  \sigma_1
 \, , \forall k \in \mathbb{R}^3 \, , \label{eq.7prima}
\end{align}
with $\Lambda(k):= {\bf 1}_{ \{   k : |k| < \kappa \}  } $ (the characteristic function of the set $  \{   k : |k| < \kappa \}$) being an ultraviolet cutoff and the coupling constant $ g >0 $ being a small parameter. 
 For convenience (without loss of generality) we choose the UV-cutoff scale as $\kappa   < 1$.  We assume that $f = f^* \in L^{\infty}(\mathbb{R}^3) $ is {\color{black} uniformly} bounded by $1$ {\color{black}\secret{(with respect to the $L^\infty $--norm)}} and that $\sigma_1$ is the first Pauli matrix (the diagonal entries equal zero and the other entries equal $1$). The energy of the full system is {\color{black}\secret{, of course,}} the sum of all energies just introduced,
\begin{align}
H := H_{\rm at}  + \Phi (G) + H_\rad \, . \label{eq.5}
\end{align}
Here we use the identifications $   H_{\rm at} \equiv  H_{\rm at} \otimes \mathds{1}_{\mathcal{F}} $, $ H_\rad \equiv \mathds{1}_{\mathcal{H}_{\rm at}} \otimes H_\rad. $   In general, for pairs of Hilbert spaces $V_1$ and $V_2$ and operators $ A_1  $  and $A_2$ defined on $ V_1 $ and $V_2$, respectively, we leave out trivial tensor factors and write
\begin{equation}
A_1 \equiv A_1 \otimes \mathds{1}_{V_2} , \hspace{2cm} A_2 \equiv  \mathds{1}_{V_1}\otimes A_2.  
\end{equation}

\subsection{Main Theorem and Outline of its Proof}
Our main result is proven in Section \ref{Main-Section}, specifically it is restated in Theorems \ref{MAIN} and \ref{GS-Projection} (see also Remark \ref{Remark}). Here we provide the core of our results in the next  
\begin{theorem} \label{ESE}
For sufficiently small {\color{black} $g > 0$} the bottom of the spectrum, 
\begin{equation}
E_{\rm gs} : = \inf \sigma(H),
\end{equation}
is an eigenvalue of $H$. 
\end{theorem}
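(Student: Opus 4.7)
The plan is to follow the multiscale approach outlined in the introduction. Fix $\gamma \in (0,1)$ and set $\rho_n := \kappa \gamma^n$. Introduce the infrared-regularized coupling $G_n(k) := \one(|k| \geq \rho_n)\, G(k)$ and the corresponding Hamiltonians $H_n := H_{\rm at} + \Phi(G_n) + H_\rad$. Since each $G_n$ vanishes on a neighborhood of $0$, standard results for infrared-regular models yield, for $g$ sufficiently small and each $n$, a simple ground-state eigenvalue $E_n := \inf \sigma(H_n)$ with normalized eigenvector $\phi_n$, rank-one eigenprojection $P_n = |\phi_n\rangle\langle \phi_n|$, and a spectral gap $\gap(H_n) \gtrsim \rho_n$ above $E_n$. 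These facts are proved by induction on $n$: the base case is a Kato-type perturbative construction, and the step $n\to n+1$ treats $\Phi(G_{n+1}-G_n)$ as a perturbation of $H_n$. The critical infrared behavior $|G(k)|\sim|k|^{-1/2}$ gives $\|G_{n+1}-G_n\|_{L^2} = O(\rho_n^{1/2})$, so measured against the spectral gap of order $\rho_n$, this perturbation is formally of order $g\rho_n^{-1/2}$ --- large enough to break the naive iteration.

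The heart of the proof is showing that $(\phi_n)$ is Cauchy in $\cH$, for then its nonzero limit $\phi_{\rm gs}$ exists and, by strong resolvent convergence $H_n \to H$ together with $E_n \to E_{\rm gs}$, satisfies $H\phi_{\rm gs} = E_{\rm gs}\phi_{\rm gs}$. A crude second-order perturbation bound gives only
\begin{equation*}
\|\phi_{n+1}-\phi_n\| \ \lesssim \ \frac{\|\Phi(G_{n+1}-G_n)\phi_n\|}{\gap(H_n)} \ \lesssim \ \frac{g\,\rho_n^{1/2}}{\rho_n} \ = \ g\,\rho_n^{-1/2},
\end{equation*}
which is not summable --- exactly the manifestation of infrared criticality that historically cast doubt on binding at the exponent $-1/2$. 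To gain the missing factor of $\rho_n$ one must identify cancellations in the multiscale expansion of $P_{n+1}-P_n$, and this is where the symmetry $\cS = \sigma_3\,(-\one)^{\cN_\rad}$ is decisive: since $[\cS,H_n]=0$, the projection $P_n$ lies in a single eigenspace $\cH_\pm$ of $\cS$, and, together with $\{\sigma_1,\sigma_3\}=0$, this forces the identity \eqref{eq-intro-2}, $\Tr\{P_n\,\sigma_1\,P_n\}=0$. Because $\Phi(G_{n+1}-G_n)$ has the factorized form $\sigma_1\otimes(\text{field operator})$, this identity and its analogues for operators of the shape $P_n\,\sigma_1\,X\,\sigma_1\,P_n$ with $\cS$-even $X$ force cancellation of the would-be dominant contributions in a carefully organized expansion of $P_{n+1}-P_n$, reducing their order from $g\rho_n^{-1/2}$ to a summable $O(g^2 \rho_n^{\alpha})$ for some $\alpha>0$.

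Concretely, the projection difference would be represented by a Riesz contour integral,
\begin{equation*}
P_{n+1}-P_n \ = \ -\frac{1}{2\pi i}\oint_{\Gamma_n}\bigl[(H_{n+1}-z)^{-1} - (H_n-z)^{-1}\bigr]\,dz,
\end{equation*}
with $\Gamma_n$ a small circle enclosing only $E_n$ and $E_{n+1}$, the resolvent difference expanded by the second resolvent identity, and the $\cS$-induced cancellations applied term by term. The main obstacle --- and the technical heart of the paper --- is to propagate uniformly in $n$ the inductive hypotheses (spectral gap $\gap(H_n)\gtrsim\rho_n$, control of the soft-photon content of $\phi_n$, and norm bounds on $P_n^\perp(H_n-z)^{-1}P_n^\perp$) while simultaneously verifying that \emph{every} term which a priori would be of dangerous order $g\rho_n^{-1/2}$ is either annihilated by the $\cS$-symmetry through an analogue of \eqref{eq-intro-2} or cancels against another such term at the same order. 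Once the Cauchy property is established, the limit $\phi_{\rm gs}:=\lim_n\phi_n \neq 0$ is the desired ground-state eigenvector of $H$, proving Theorem~\ref{ESE}.
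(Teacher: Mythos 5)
Your outline reproduces the paper's strategy (infrared cutoffs at scales $\rho_n=\kappa\gamma^n$, gap $\gtrsim\rho_n$, Riesz projections, the symmetry $\cS=\sigma_3(-\one)^{\cN_\rad}$ forcing $P_n\sigma_1P_n=0$, summability of $\|P_{n+1}-P_n\|$, closedness/resolvent convergence to pass to the limit), but the decisive step is asserted rather than proved: you say the $\cS$-induced identity ``forces cancellation of the would-be dominant contributions \ldots reducing their order to a summable $O(g^2\rho_n^\alpha)$,'' without exhibiting the mechanism. A single application of $P_n\sigma_1P_n=0$ does \emph{not} suffice. It lets one write $\Phi(\widetilde G_n)\widetilde P_n=(P_n^\perp\otimes P_{\widetilde\Omega_n}^\perp)\,\sigma_1\,(P_n\otimes P_{\widetilde\Omega_n}^\perp)\,\Phi(\eta_n)\widetilde P_n$ and thereby gain the full factor $\|\eta_n\|_{L^2}\sim g\rho_n$ (note: in $d=3$ with $|G(k)|\sim|k|^{-1/2}$ one has $\|G\one_{B_n\setminus B_{n+1}}\|_{L^2}\sim g\rho_n$, not $O(\rho_n^{1/2})$ as you write, and the naive bound on $\|P_{n+1}-P_n\|$ is the marginal $O(g)$, not the divergent $O(g\rho_n^{-1/2})$). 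But this gain is immediately destroyed if one then estimates the reduced resolvent by $\|R_{n+1}^\perp\|\lesssim\rho_{n+1}^{-1}$. The actual content of the paper is the identification of the scale-dependent quantity $\|R_n^\perp\sigma_1P_n\|$ and a scale-to-scale recursion for it (Lemmas~\ref{Onesimo} and~\ref{Onesimo2} and the proof of Theorem~\ref{MAIN}): using the symmetry \emph{again} at each step, in the form $\widetilde P_{n-1}^\perp\sigma_1\widetilde P_{n-1}=\bigl(P_{n-1}^\perp\otimes P_{\widetilde\Omega_{n-1}}\bigr)\sigma_1\widetilde P_{n-1}$, one shows $\|R_n^\perp\sigma_1P_n\|\le(1+Cg/\gamma)\,\|R_{n-1}^\perp\sigma_1P_{n-1}\|$, so this quantity grows only like $(1+Cg/\gamma)^n$ rather than like $\rho_n^{-1}$; combined with the prefactor $g\rho_n$ this yields $\|P_{n+1}-\widetilde P_n\|\le 48g(\gamma+147g)^n$, which is summable for small $\gamma$ and $g$. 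Your proposal never identifies this quantity or its recursion, and your description of the mechanism as a term-by-term cancellation inside a single contour-integral expansion at fixed $n$ does not match (and would not by itself deliver) the needed decay. This is the gap.

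Two smaller points. First, for the gap estimate $\gap(H_n)\gtrsim\rho_n$ to hold you must also cut off the dispersion, i.e.\ work with $H_\rad(\omega_n)$ on $\cH_n=\cH_\at\otimes\cF_n$ (or tensor with the soft vacuum as in the paper); with the full $H_\rad$ on the full Fock space, soft free photons of energy $<\rho_n$ sit directly above $E_n$ and there is no gap. Second, your concluding step is fine: once $\phi_n\otimes\Omega_n^\infty$ is Cauchy, either strong resolvent convergence or, as in the paper, the identity $HP_n^\infty=E_nP_n^\infty+\Phi(G_n^\infty)P_n^\infty$ with $\|\Phi(G_n^\infty)P_n^\infty\|\le\|G_n^\infty\|_{L^2}\to0$ together with the closedness of $H$ gives $H\phi_{\gs}=E_{\gs}\phi_{\gs}$.
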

 
%\subsection{A Guiding-Line for our Proof:  The Scale of Infrared-Regular 
%Energy Operators (Multiscale Analysis) }

%The main task of the present paper is to prove the existence of a ground 
%state energy of the Hamiltonian $H$, i.e., to prove that the bottom of the 
%spectrum of $H$ is an eigenvalue. 

The proof of Theorem \ref{ESE} uses perturbation theory in a {\color{black}\secret{clever} non-trivial}} way. Notice that the free Hamiltonian 
\begin{equation}
H_{\rm Free} : = H_{\rm at}  +  H_\rad
\end{equation}
has zero as an eigenvalue at the bottom of its spectrum. As  the spectrum of $H_{\rm Free}$ is $[0, \infty)$, $0 $ is immersed in the continuum. Thus, standard perturbation theory of isolated eigenvalues of finite multiplicity cannot be applied and multiscale or renormalization techniques must be used, we utilize multiscale analysis. Since the coupling function behaves asymptotically as $\| G(k) \| \sim |k|^{-1/2}$, the interaction $\Phi(G)$ scales like the field Hamiltonian $H_\rad$ under unitary dilations, namely, like an inverse length. Consequently, $\Phi(G)$ is a {\it marginal} perturbation of $H_\rad$, which makes the direct application of renormalization group schemes difficult. In order to prove that $\Phi(G)$ is actually {\it marginally irrelevant}, we identify a new symmetry $\mathcal{S}$ of the system which allows us to conclude that the matrix element $\langle \psi | \sigma_1 \psi \rangle $ vanishes, for any eigenvector of $H$. One of the main purposes of this paper is to demonstrate that this information can be used to show the convergence of the ground state construction.       

The multiscale analysis is based {\color{black}\secret{in} on} the
construction of {\color{black}\secret{an infrared regular sequence of} a
  sequence of infrared regular} Hamiltonians whose ground state
projections converge to a projection with range consisting of
eigenvectors of $H$. The elements of {\color{black}\secret{the} this}
sequence of Hamiltonians cut off {\color{black}\secret{low energies}
  small momenta}, but progressively incorporate {\color{black}
 \secret{more low energies, as the parameter of the sequence goes,}
  ever smaller momenta} in such a way that {\color{black}\secret{an
    infinite iteration of the scheme brings all energies into the
    picture} eventually all momenta are taken into account}. More
specifically, we proceed as follows:

The infrared cutoff functions are characterized by a decreasing sequence $(\rho_n)_{n \in \mathbb{N}_0}$ of numbers
\begin{align}
\rho_n := \kappa \gamma^n < 1 \, , \, \forall n \in \mathbb{N}_0 \, ,  
\end{align}
for some specifically small parameter $\gamma \in (0, 1)$ that will be conveniently chosen later on (recall that we set $\kappa <1$ above). Here  $\mathbb{N}_0 : = \mathbb{N}\cup \{ 0 \}$.  
The number $\rho_n$ represents the lowest allowed photon energy at step $n$.   
We cut off energies below $\rho_n$ in the following manner: Define
\begin{align}\label{mite}
\omega_n := &  {\bf 1}_ { \mathbb{R}^3 \backslash B_n  } \, \omega  \, ,  & G_n:= &  {\bf 1}_ { \mathbb{R}^3 \backslash B_n } \, G \, , 
%\omega^m_n (k) := &  {\bf 1}_ { B_m \backslash B_n } \, \omega \, , & G^m_n (k) 
%:= &    {\bf 1}_ { B_m \backslash B_n }  \, G \, , 
\end{align}
where 
{\color{black} 
\begin{align}\label{B_n}
B_n := &  \big\{ k \in \mathbb{R}^3 : \quad |k| < \rho_n \big\}
\end{align}
}is the ball centered at the origin with radius $\rho_n$.  In
Eq.~\eqref{mite}, $\omega_n $ cuts off the free photon energies below
$\rho_n$. Similarly $ G_n$ cuts off interacting energies. The symbol $
{\bf 1}_{A} $ represents the characteristic (or indicator) function of
the set $A$. Notice that $G_0 = 0$, since $G$ is supported in
$B_0$. {\color{black}\secret{We now} Now we} define a sequence of
{\color{black} infrared-cutoff} Hamiltonians $(H_n)_{n \in
  \mathbb{N}}$. Set
\begin{align}
H_{n} := H_{at}  +  \Phi (G_n) + H_{ph}(\omega_n) \, 
\end{align}
on $\mathcal{H}_{n} := \mathcal{H}_{at}  \otimes \mathcal{F}_n $, where  
\begin{align}
\mathcal{F}_n := \mathcal{F}[L^{2}[\mathbb{R}^3 \backslash B_{n} ]] \, , \label{eq.2.2primas} 
\end{align} 
with vacuum state denoted by $\Omega_n$. Cutting {\color{black} off the
  photon} energies below $\rho_n$ {\color{black}\secret{off}} implies
that{\color{black}, for every $n$,} the Hamiltonian $H_n$ {\color{black}\secret{, for every $n$,}} has an isolated eigenvalue,
\begin{equation}\label{En}
E_n : = \inf \, \sigma(H_n),
\end{equation}
at the bottom of its spectrum{\color{black},} and we prove the gap above
the spectrum {\color{black}\secret{is bigger or equal than} to be bigger
  than, or equal to, $\frac{1}{2}\rho_n$}. The idea of our construction
is quite natural, we prove that the sequence of eigenvalues converges
and that the limit of it is actually an eigenvalue of the spectrum of
$ H $, namely, its ground state energy.

%As we explain above, the factor $1/|k|^{1/2}  $ is critical. This means that our analysis must use the fact that $G$ is off-diagonal (see the factor $\sigma_1$ in Eq.~\eqref{eq.7prima}). 
%%The critical aspect of the $1/2-$exponent is reflected in the fact that the operator norm in Eq.~\eqref{relbound} is of order $g$, i.e.
%%\begin{equation}\label{relbound}
%%\Big \|  g \Phi(G_n) \frac{1}{ H_n - z} \Big \| \leq g {\bf C} \frac{1}{|z - E_n|},
%%\end{equation}
%%for a constant ${\bf C}$, uniformly with respect to $n \in \mathbb{N}$ and for every $z$ with $\Re{z} \leq 1$.    
%%
%%
% However, not all of our calculations use this fact, in Section \ref{regular}
%we report some key estimates not using the symmetry on $G$. These computations are, naturally, much simpler that the ones using the referred symmetry. Although they are standard (to some extent), we prove them in the present paper, for the convenience of the reader.  With this respect, 

Besides the considerations involving the symmetry $\mathcal{S}$, we
need some robust estimates which are standard, but included in this
paper in Section \ref{regular}, for {\color{black}\secret{the}}
convenience of the reader.  The principal properties we prove are
bounds for the energy differences $ |E_{n + 1} - E_{n}| $ and the
distance {\color{black} ${\rm gap}_n > 0$} of $E_n$ to the rest of the
spectrum of $H_n$, {\color{black}\secret{we call it the gap at step $n$
    (${\rm gap}_n$)} which we call the gap at step $n$}, for every
$n$: In Proposition \ref{energies} we {\color{black}\secret{prove} show}
that
\begin{equation}\label{C1}
| E_{n + 1} - E_n | <   g \rho_n, 
\end{equation} 
and in Lemma \ref{gapst} we prove that, 
for every $n \in \mathbb{N}_0$,
\begin{equation}\label{int1}
 \gap_{n} :=  \inf \big\{ \sigma( H_{n} ) \backslash \{ E_{n} \} \big\} \geq  \frac{1}{2}  \rho_{n},
\end{equation}
for sufficiently small $g$, uniformly {\color{black}\secret{with respect
    to} in} $n$. Eq.~\eqref{C1} already implies the convergence of the
sequence $\{ E_n \}_{n \in \mathbb N}$. We actually prove (see Remark
\ref{Remark}) that
\begin{equation}
E_{\rm gs}  = \lim_{n\to \infty} E_n,
\end{equation}
{\color{black} where we} recall that $E_{\rm gs} $ is the infimum of the spectrum of $H$. 
%Evidently, $E_{\rm gs}$ is the best candidate for the ground-state energy of $H$, nevertheless such a sentence is not correct, in general, because the vector associated to the ground-state energy might not belong to the Hilbert space of the system.     
The proof that the limit $\lim_{n \to \infty} E_n$ yields the ground state energy goes along with proving the convergence of the ground state projections corresponding to $H_n$, for every $n$, {\color{black}\secret{which}} is the main part of our proof. These projections, at each step $n$, are proved to exist and to be rank-one, for sufficiently small $g$ uniformly in $n$: Eq.~\eqref{int1} permits us to calculate the projection associated to $E_n$ using Riesz integrals, since it implies that $E_n$ is isolated. We actually define, for every $n \in \mathbb{N}$,
\begin{align} \label{gamman}
\Gamma_n := \Big\{ z \in \mathbb{C} \, \, \Big\vert \, \, \vert z - E_n \vert = \frac{1}{8} \rho_{n} \Big\} \, 
\end{align}
and 
\begin{equation}\label{pene}
P_n : =  \frac{-1}{2\pi i } \int_{\Gamma_n} \frac{dz}{H_n - z}. 
\end{equation}
It is not difficult to prove that $P_n$ is a rank-one projection, for
every $n$. It follows from the fact that, for sufficiently small $g$
({\color{black}\secret{independently of} uniformly in} $n$), $\| P_{n +1}
- P_n \| $ is {\color{black} strictly} smaller than $1$ and $P_0$ is
rank-one -- see Corollary \ref{rankone}. As {\color{black}\secret{we}}
mentioned above, this is proved without using that $G$ is
off-diagonal.

The most difficult part of the paper is to prove that the sequence of projections 
$ \{ P_n \}_{n \in \mathbb N} $ converges (the range of the limiting projection actually consists of ground state eigenvectors of the original Hamiltonian $H$). Of course, the projections $   P_n $, for $n \in \mathbb{N}$, act on different Hilbert spaces, but we identify them with projections acting on the full Hilbert space $\mathcal{H}$ by applying tensor products with the vacuum state projections on 
{\color{black} \begin{align}
\mathcal{F}_n^\infty := \mathcal{F}[L^{2}(B_{n})] \, , \label{eq.2.2} 
\end{align}
}i.e., we define 
\begin{equation}\label{extra}
P^\infty_n := P_n \otimes \Big ( | \Omega_{n}^\infty \rangle \langle \Omega_{n}^\infty | \Big ),
\end{equation}    
where $ \Omega_{n}^\infty $ is the vacuum in $ \mathcal{F}_n^\infty  $, notice that 
$\mathcal{H} = \mathcal{H}_n \otimes \mathcal{F}_n^\infty $.  

In Theorem \ref{MAIN} we prove that, for sufficiently small $g$ and $\gamma$,  
\begin{equation}\label{main1}
\|  P^{\infty}_{n + 1} - P^{\infty}_n    \| \leq  \Big ( \frac{1}{2} \Big )^{n}.
\end{equation}
Observe that the bound above is exponentially {\color{black}
 \secret{decreasing} small in $n$}. Actually, in the irrelevant case --
if the factor is $ 1/|k|^{1/2 - \mu} $, for some $\mu > 0$, instead of
$ 1/|k|^{1/2} $ -- a positive power of $\rho_n$ appears multiplying
the right side of Eq.~\eqref{main1} (see Remark \ref{reg}). Of course,
this term makes the calculations much simpler and direct (actually if
we {\color{black}\secret{would assume} had assumed} an infrared regular
interaction, Section \ref{regular} would basically contain the proof
of our main result -- see Remark \ref{reg}). {\color{black}\secret{Our
    situation is more difficult} The present situation is more
  complicated} and a {\color{black} more} subtle argument is
required. {\color{black}\secret{This is when the symmetry on $G$ (i.e.,
    it is off-diagonal) plays a role} At this point the symmetries $H$
  and $H_n$ possess, play a key role. Namely, with the help of a
  symmetry operator $\mathcal{S}$, see \eqref{mathcals}, which we
  prove to commute with $H_n$ ($n \in \mathbb{N}_0$), i.e., we
  identify a new conserved quantity of the model. The symmetry
  $\mathcal{S}$ is used to prove\secret{We prove}} that the fact that
$\sigma_1$ maps the ground state eigenspace corresponding to $H_{\rm
  at}$ to its orthogonal complement {\color{black}\secret{is preserved
    by} holds also true for every member of} the sequence $\{ H_{n}
\}_{n =0 }^{\infty} $ of operators, i.e., {\color{black}\secret{we prove
    that}}
\begin{equation}\label{simm}
P_n \sigma_1 P_n = 0,
\end{equation}
for every $n$, see Lemma \ref{key}. {\color{black}\secret{This is achieved with the help of a symmetry operator $\mathcal{S}$, see \eqref{mathcals}, which we prove to commute with $H_n$ ($n \in \mathbb{N}_0$), i.e., we identify a new conserved quantity of the model.}} 

The proof of Eq.~\eqref{main1} is technical and concentrated in Section \ref{keyestimates}, see Lemmas \ref{lemma.7}, \ref{Onesimo} and \ref{Onesimo2}. These Lemmas are collected in the proof of Theorem \ref{MAIN}, which is our principal demonstration.

Eq.~\eqref{main1} implies that the sequence of projections converges, provided that we choose $g > 0$ sufficiently small.  
Setting 
\begin{equation}
P_{\rm gs} : = \lim_{n \to \infty } P^{\infty}_n, 
\end{equation}
we observe that $P_{\rm gs}$ is a rank-one projection (being the limit
of rank-one projections) and, most importantly: Any non-zero vector in
the range of $ P_{\rm gs}$ is an eigenvector of $H$ corresponding to
the eigenvalue $E_{\rm gs}$. {\color{black} While this is our main
  result, we do not give a proof of the simplicity of the eigenvalue
  $E_{\rm gs}$ here. Note, however, that the semigroup $e^{-\beta H}$
  generated by $H$ is known to be positivity improving (in a suitable
  representation) \cite{HirokawaHiroshimaLorinczi2014}, and from this
  the uniqueness (non-degeneracy) of the ground state follows from a
  standard Perron-Frobenius argument, see, e.g.,
  \cite[Thm.~XIII.44]{ReedSimonIV1978}.}

\subsection{Prospective Generalizations}

In this paper we assume that the interaction between the atom and the
photon field is off-diagonal, and we restrict ourselves to a two-level
atom. {\color{black}\secret{The restriction to a two-level atom is not
    essential: A generalization to an $n-$level atom is
    straightforward, once we assume suitable symmetries generalizing
    the off-diagonal symmetry of the two-level case. We do not carry
    out this generalization here, because we believe it would not
    bring new insights and, on the contrary, it would complicate the
    presentation. The important restriction of our model is assuming
    the symmetry we referred above on the interaction. However, we
    believe it is unavoidable, because there are several negative
    results and strong requirements on putative ground states, see
    \cite{AraiHirokawaHiroshima1999}.} The generalization to an $N$-level atom is, however,
  not straightforward, because the mere existence of a symmetry
  $\mathcal{S}_N$ similar to the symmetry $\mathcal{S}_2 \equiv
  \mathcal{S}$ of the two-level atom implies severe and unphysical
  restrictions on the structure of the coupling function $G$. Indeed,
  the transcription of the proof of Lemma~\ref{key} would require the
  symmetry $\mathcal{S}_N$ to be invertible and commuting with $H$ and
  the $N \times N$ coupling function $G(k)$ to be similar (as a
  matrix) to $-G(k) = \mathcal{S}_N G(k) \mathcal{S}_N^{-1}$ when
  conjugated with $\mathcal{S}_N$. This is not surprising because of
  several known negative results and strong requirements on putative
  ground states, see \cite{AraiHirokawaHiroshima1999}.}

{\color{black} For a coupling function $G$ with a bipartite structure,
  these requirements are fulfilled. Bipartiteness means that the
  atomic energy levels form two disjoint sets, $A$ and $B$, say, and
  level transitions $A \to A$ and $B \to B$ are forbidden. (For the
  two-level atom considered here, $A = \{ 0 \}$ and $B = \{2\}$ and
  bipartiteness simply means that there are no self-interactions of
  the atomic orbitals.) There is no physical reason that would justify
  this assumption, in general. Nevertheless, the proof of binding
  presently given can be easily transcribed to the general bipartite
  situation}

{\color{black} As established here for two-level atoms or elsewhere for
  other models of NR~QED, \textit{binding} states that the ground
  state energy is an eigenvalue and the ground state vector is an
  element of the Hilbert space $\mathcal{H} = \mathcal{H}_\at \otimes
  \mathcal{F}$ which carries a Fock representation of the canonical
  commutation relation. We believe that, following an argument
  originating in work by Fr\"ohlich \cite{Froehlich1973} and Pizzo
  \cite{Pizzo2003,Pizzo2005} and further developed by Chen and
  Fr\"ohlich \cite{ChenFroehlich2007} and by Matte and one of us
  \cite{KoenenbergMatte2014}, it is possible to establish binding in a
  more general (and weaker) sense, and we now outline how this could
  be done on the example of the \textit{Generalized
    Spin-Boson-Hamiltonian} given by
\begin{equation}\label{eq1}
\widehat{H} = H_{ \rm at}+ H_{\mathrm{ph}}(\omega)+\Phi(\widehat{G}),
\end{equation}
which is an operator on 
\begin{equation}
\mathcal{H}=\CC^d\otimes \mathcal{F}.
\end{equation}
The atomic Hamiltonian, $H_{ \rm at}$, is a diagonal self-adjoint
$d\times d$-matrix whose lowest eigenvalue is simple and $\widehat{G}$
is of the form
\begin{align}
\widehat{G}(k) :=  g \frac{\Lambda(k)}{ 4 \pi \sqrt{\omega(k)}} f(k)  \, M
 \, ,  \ \forall k \in \mathbb{R}^3 \, , \label{eq.7a}
\end{align}
where $f$ is as before, and $M$ is a self-adjoint $d\times
d$-matrix. Similar to the method applied in this paper, we define an
infrared-regularized Hamiltonian $\widehat{H}_n$ on
$\mathcal{H}_n=\CC^d\otimes \mathcal{F}_n$ by replacing $\widehat{G}$
by $\widehat{G}_n:= {\bf 1}_ { \mathbb{R}^3 \backslash B_n } \,
\widehat{G}$ and $\omega$ by $\omega_n$ in
\eqref{eq1}. Proposition~\ref{gapgood} and its proof hold for
$\widehat{H}_n$ mutatis mutandis. In particular, if $g$ is
sufficiently small then there exists a unique normalized ground state
$\widehat{\phi}_n$ of $\widehat{H}_n$, for every $n \in \NN_0$.}

{\color{black} As opposed to the sequence $\{ \phi_n \}_{n = 0}^\infty$
  of ground states analyzed in this paper, the sequence 
  $\{ \widehat{\phi}_n \}_{n = 0}^\infty$ of ground states does not
  converge (strongly), but $ \widehat{\phi}_n \to 0$ weakly, as $n \to
  \infty$. Yet, as a state on $\bigcup_{m=1}^\infty \cB[\mathcal{H}_m] \otimes \mathds{1}_{\mathcal{F}_m^\infty}  \ni A$, 
\begin{equation}\label{mutatismutandis}
\hat{\omega}(A) 
\ = \ 
\lim_{n\to\infty} \langle \widetilde{\phi}_n, \: A \,\widetilde{\phi}_n\rangle.
\end{equation}
does exist, using $\widetilde{\phi}_n=\widehat{\phi}_n\otimes
\Omega_n^\infty$, cf.~\eqref{extra}. This limit state can be
represented as the GNS-vector in a non-Fock representation of the
CCR-algebra. The absence of binding in the strict sense is reflected
here in the fact that there is no vector $\widehat{\phi}_{\rm gs}$
(nor density matrix) in the original Hilbert space $\mathcal{H}$ such
that $\hat{\omega}(A) = \langle \widehat{\phi}_{\rm gs}, \: A \,
\widehat{\phi}_{\rm gs} \rangle$.

Yet, the nature of the limit in \eqref{mutatismutandis} can be made
more precise. Namely, the conjugation of $\widehat{H}_n$ by a unitary
operator $U_n$,
\begin{equation}\label{mutatismutandis-2}
K_n \ := \ U_n \, \widehat{H}_n \, U_n^* \otimes 
|\Omega_n^\infty \rangle\langle \Omega_n^\infty| , 
\quad 
\Psi_n \ := \ 
U_n \, \widehat{\phi}_n \: \otimes \: \Omega_n^\infty ,
\end{equation}
for each $n$, yields new sequences $\{ K_n \}_{n=0}^\infty$ and $\{
\Psi_n \}_{n=0}^\infty$ of Hamiltonian operators on $\mathcal{H}$ with
ground state energies $E_n$ and unique normalized ground state vectors
$\Psi_n$. The main point is that there exists a sequence $\{ U_n
\}_{n=0}^\infty$ of suitably chosen Bogolubov transformations (in
fact, even Weyl transformations with a fairly explicit description)
such that $K_n  \to K$ converges in strong resolvent sense to
a self-adjoint operator $K$ on $\mathcal{H}$ and $\Psi_n \to \Psi_{\rm
  gs} \in \mathcal{H}$, as $n \to \infty$. The sequence $\{ U_n
\}_{n=0}^\infty$ of Bogolubov transformations, however, does not
converge, and even though $K$ can be formally obtained from a shift
$a(k) \mapsto a(k) + h(k)$, for a suitable function $h:
\mathbb{R}\setminus\{0\} \to \mathbb{C}$, this shift is not unitarily
implementable, i.e., there is no unitary operator $U$ on $\mathcal{H}$
such that $K = U \widehat{H} U^*$. Nevertheless, one may argue that
$K$ is the new, renormalized Hamiltonian describing the physics
(generating the actual dynamics).  }

\subsection{Acknowledgements}

\noindent M. Ballesteros is a fellow of the Sistema Nacional de Investigadores (SNI). The research work of M. Ballesteros is partially supported by the project PAPIIT-DGAPA UNAM IN102215. \\
\noindent Part of this work was carried out while M. Könenberg was a Postdoctoral Fellow in the Department of Mathematics and Statistics at Memorial University of Newfoundland, Canada, where he was supported by an NSERC Discovery Grant and an NSERC Discovery Grant Accelerator.\\
\noindent The work of M. Könenberg was partially supported by the Deutsche Forschungsgemeinschaft (DFG) through the Research Training Group 
   1838:  Spectral Theory and Dynamics of Quantum Systems.

\section{The Sequence of Infrared-Regular Ground State Energies and Projections}\label{regular}

In this section we study spectral properties of $H_n$, for every $n \in \mathbb{N}_0$.  We start with a brief notation section (see Section \ref{not}, in which we also state some standard results). Then we estimate the distance between consecutive spectral points $E_{n+1}$ and $E_n$, see Proposition \ref{energies}. Right after, we prove that $E_n$ is isolated from the rest of the spectrum of $H_n$, for every $n \in \mathbb{N}_0$. This is achieved in Proposition \ref{gapgood}, which quantifies the gap above the ground state energy of $H_n$ and is a direct consequence of Lemma \ref{gapst}. The main technical tool in the proof of both, Proposition \ref{energies} and Proposition \ref{gapgood}, is Lemma \ref{tecnico}.  Of course, the gap estimates ensure the existence of ground state projections, see 
Eqs.~\eqref{pene} and \eqref{penetilde}. An additional effort permits us to estimate the norm difference of projections $P_n$ and $P_{n+1}$, where $P_n$ denotes the projection onto the ground state eigenspace of $H_n$. This is derived in Proposition \ref{Pro.reg.main}, which due to $ \| P_n - P_{n+1} \| < 1 $ implies that all projections $P_{n}$, $n \in \mathbb{N}_0$, are rank-one (see Corollary \ref{rankone}). % Propositions \ref{energies}, \ref{gapgood} and \ref{Pro.reg.main} are the main results in Section \ref{inte}. 
In Section \ref{symmetry} we present a new conserved quantity in the spin-boson model with off-diagonal interaction. We prove that the fact that  $\sigma_1$ maps the ground state eigenspace corresponding to $H_{\rm at}$ to its orthogonal complement is preserved by the flow of operators 
$\{ H_{n} \}_{n  \in \mathbb{N}_0} $, i.e., we prove that $P_n \sigma_1 P_n = 0$, for all $n$, see Lemma \ref{key}. This is achieved with the help of an operator $\mathcal{S}$, see \eqref{mathcals}, which we prove to commute with $H_n$ ($n \in \mathbb{N}_0$) and hence identify a new conserved quantity of the model.
% We conclude this section with Subsection \ref{fur}, in which we present some technical results that we utilize in our main section, Section \ref{Convergence}.                    

%
%
\subsection{Notation and Standard Results}\label{not}
For every normed vector space $V$, we denote by $\| \cdot  \|_{V}$ its norm. If $V$ is a Hilbert space, we denote by $ \langle \cdot | \cdot \rangle_{V} $ its inner product. If it is clear, however, from the context, we omit the subscripts $V$.
% and identify 
%\begin{align}
%  \| \cdot  \|_{V} \equiv \| \cdot  \|, \hspace{2cm}  \langle \cdot | 
%\cdot \rangle_{V} \equiv   \langle \cdot | \cdot \rangle.  
%\end{align}

%In this section
 We introduce some basic notation that we use for the construction of the sequence of eigenvalues $\{E_n \}_{n \in \mathbb{N}_0} $ and ground state projections $\{P_n \}_{n \in \mathbb{N}_0} $. % We additionally state a basic result we will
 % frequently use, Lemma \ref{basic.estimate.1}.  
Recalling that 
\begin{align}
B_n : = \{  k \in \mathbb{R}^{3}  | |k| < \rho_n \} \subset \mathbb{R}^{3}, \quad
\rho_n = \kappa \gamma^{n},
\end{align}
we introduce
\begin{align}\label{widetildes}
\widetilde \omega_n (k) := &  {\bf 1}_ { B_{n} \backslash B_{n+1} } \, \omega \, , & \widetilde  G_n (k) := &    {\bf 1}_ { B_{n }\backslash B_{n+1} }  \, G \,  
\end{align}
and the Fock spaces
\begin{align}
\widetilde{\mathcal{F}}_n := \mathcal{F} [L^{2}[B_{n} \backslash B_{n+1 } ] ],  \label{eq.2.3} 
\end{align}
with vacuum states
\begin{equation}\label{vac}
\Omega_{ L^{2}[B_{n} \backslash B_{n+1 } ] }  \equiv \widetilde \Omega_n.
\end{equation}
The projections onto the one-dimensional subspaces generated by the vectors 
$\Omega, \Omega_n $   and  $   \widetilde \Omega_n$ are denoted by
\begin{align}\label{pomega}
P_{\Omega },  \hspace{2cm}P_{\Omega_n },  \hspace{2cm}P_{\widetilde \Omega_n },  
\end{align}   
respectively.
We define
\begin{align}
\widetilde H_{n} :=  H_n \otimes \mathds{1}_{\widetilde { \mathcal{F}}_n} + \mathds{1}_{\mathcal{H}_n} \otimes H_{ph}(\widetilde \omega_n) \, ,  \label{eq.2.3.a}
\end{align}
as operators on (a suitable domain in) $\mathcal{H}_{n+1}$.
%, where
%\begin{align}
%\mathcal{F}^m_n := \mathbb{C} \oplus \bigoplus_{l=1}^{\infty} \mathcal{S}_l \bigotimes_{k=1}^{l} L^{2}[B_m \backslash B_{n} ] \,  \label{eq.2} 
%\end{align}
We observe that $ \inf \, \sigma( H_{n} )   =  \inf \, \sigma( \widetilde H_{n} )   $ and denote
\begin{align}
E_{n} := \inf \, \sigma( H_{n} )   =  \inf \, \sigma( \widetilde H_{n} )    \, .
\end{align}
%the infimum of the spectrum of $H_{m,n}$. 
The distance  (gap) from $E_n$ to the rest of the spectrum of $H_{n}$ (respectively 
$\widetilde{H}_n $) is given by
\begin{align}
\gap_{n} := \inf \big\{ \sigma( H_{n} ) \backslash \{ E_{n} \} \big \} - E_n 
\end{align}
and
\begin{align}
\widetilde{\gap}_{n} := \inf \big\{ \sigma( \widetilde{H}_{n} ) \backslash \{ E_{n} \} \big \} - E_n \, ,
\end{align}
%We define 
%$
%C_f := \max \{ \Vert f_1 \Vert_\infty , \Vert f_2 \Vert_\infty \},
%$
%which allows us to bound $G^n_{n+1}$ and $\frac{G^n_{n+1}}{\sqrt{\omega}}$ by
%\begin{align}
%\Vert G^n_{n+1} \Vert^2 \leq & \frac{C_f^2}{2 \pi } \int_{\rho_{n+1}}^{\rho_n} k dk \leq \frac{(1- \gamma^2) C_f^2}{4 \pi} \rho_n^2 
%\, , \\
%\Vert \omega^{-\frac{1}{2}} G^n_{n+1} \Vert^2 \leq & \frac{C_f^2}{2 \pi} \int_{\rho_{n+1}}^{\rho_n}  dk \leq \frac{(1-\gamma)C_f^2}{2 \pi}\rho_n \, ,
%\end{align}
%for every $n \in \mathbb{N}$, and state a basic-estimate, which is proven in many texts, see e.g., in [?]:
respectively. The following basic estimate is frequently used in this paper (see \cite{BachFroehlichSigal1998a,BachFroehlichSigal1998b} for a proof):     
\begin{lemma}\label{basic.estimate.1}
Let {\color{black}\secret{$\nu > 0$} $\rho > 0$} be arbitrary. For all $F \in L^2(\mathbb{R}^3 ; \mathbb{C})$ with
$\omega^{-\frac{1}{2}}F \in L^2(\mathbb{R}^3 ; \mathbb{R})$,
{\color{black}
\begin{align}
\Big \Vert \Phi(F ) \Big (H_\rad(\mathds{1}_{supp(F)} \, \omega) + \rho \Big )^{-\frac{1}{2}}  \Big \Vert
 & \leq 2 \Big (
 \Vert \omega^{-1/2}  {F} \Vert_{L^2} + \rho^{-1/2} \Vert   {F} \Vert_{L^2} \Big )
 \: , \label{basic-estimate.3}
\end{align}
}where $\Phi(F) $ is defined as in \eqref{eq.7}. 
\end{lemma}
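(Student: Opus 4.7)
The plan is to estimate the creation and annihilation parts of $\Phi(F) = a^*(F) + a(F)$ separately, using the standard relative bounds of the field operator with respect to $H_\rad(\omega_F)^{1/2}$, where I abbreviate $\omega_F := \mathds{1}_{\mathrm{supp}(F)}\,\omega$. The triangle inequality then combines the two pieces into the stated bound.

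First, I would handle the annihilation part. For $\psi$ in a suitable core, I write $a(F)\psi = \int \overline{F(k)}\, a(k)\psi\, dk$ and apply Cauchy--Schwarz with the weight $\omega(k)^{\pm 1/2}$:
\begin{align*}
\|a(F)\psi\|^2 \leq \left(\int_{\mathrm{supp}(F)} \frac{|F(k)|^2}{\omega(k)}\, dk\right) \left(\int_{\mathrm{supp}(F)} \omega(k) \|a(k)\psi\|^2\, dk\right)
= \|\omega^{-1/2}F\|_{L^2}^2 \,\langle\psi, H_\rad(\omega_F)\psi\rangle.
\end{align*}
Since the operator $H_\rad(\omega_F) (H_\rad(\omega_F)+\rho)^{-1}$ has norm at most $1$, replacing $\psi$ by $(H_\rad(\omega_F)+\rho)^{-1/2}\psi$ yields
\begin{equation*}
\bigl\| a(F)\, (H_\rad(\omega_F)+\rho)^{-1/2} \bigr\| \;\leq\; \|\omega^{-1/2}F\|_{L^2}.
\end{equation*}

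For the creation part, I would use the canonical commutation relation in the form $a(F)a^*(F) = a^*(F)a(F) + \|F\|_{L^2}^2$, which gives
\begin{equation*}
\|a^*(F)\psi\|^2 \;=\; \|a(F)\psi\|^2 + \|F\|_{L^2}^2 \,\|\psi\|^2.
\end{equation*}
Substituting $\psi \to (H_\rad(\omega_F)+\rho)^{-1/2}\psi$, using the estimate for $a(F)$ derived above on the first term, and bounding $(H_\rad(\omega_F)+\rho)^{-1} \leq \rho^{-1}$ on the second term, I obtain
\begin{equation*}
\bigl\| a^*(F)\, (H_\rad(\omega_F)+\rho)^{-1/2} \bigr\|^2 \;\leq\; \|\omega^{-1/2}F\|_{L^2}^2 + \rho^{-1}\|F\|_{L^2}^2,
\end{equation*}
whence by $\sqrt{a^2+b^2}\leq a+b$ for $a,b\geq 0$, the norm is bounded by $\|\omega^{-1/2}F\|_{L^2} + \rho^{-1/2}\|F\|_{L^2}$. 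Adding the two contributions via the triangle inequality gives
\begin{equation*}
\bigl\| \Phi(F)\, (H_\rad(\omega_F)+\rho)^{-1/2} \bigr\| \;\leq\; 2\|\omega^{-1/2}F\|_{L^2} + \rho^{-1/2}\|F\|_{L^2} \;\leq\; 2\bigl(\|\omega^{-1/2}F\|_{L^2} + \rho^{-1/2}\|F\|_{L^2}\bigr),
\end{equation*}
as claimed.

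There is no real obstacle here; the only subtle point to keep track of is that the cutoff $\mathds{1}_{\mathrm{supp}(F)}$ in $\omega_F$ is precisely what allows the weighted Cauchy--Schwarz estimate to close, since the integrand $|F(k)|^2 \|a(k)\psi\|^2$ is automatically supported on $\mathrm{supp}(F)$. A density/core argument (e.g.\ restricting to vectors of finite particle number with smooth one-particle wavefunctions) justifies the manipulations, after which the bound extends by closure to the full form domain of $H_\rad(\omega_F)^{1/2}$.
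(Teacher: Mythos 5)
Your proof is correct and is exactly the standard relative bound argument (weighted Cauchy--Schwarz for $a(F)$, the CCR identity $\|a^*(F)\psi\|^2=\|a(F)\psi\|^2+\|F\|_{L^2}^2\|\psi\|^2$ for $a^*(F)$, then the triangle inequality); the paper does not prove Lemma~\ref{basic.estimate.1} itself but defers to \cite{BachFroehlichSigal1998a,BachFroehlichSigal1998b}, where precisely this argument appears. Your closing remark about the cutoff $\mathds{1}_{\mathrm{supp}(F)}\,\omega$ being what makes the weighted Cauchy--Schwarz close is the one point worth spelling out, and you have it right.
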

Since we assume $\Vert f \Vert_{L^\infty} \leq  1$, we immediately get by Lemma \ref{basic.estimate.1} that, for every $n \in \mathbb{N}_0$,
\begin{align}
\Big \Vert \Phi(\widetilde G_n) \Big (H_\rad(\widetilde \omega_n) + \rho_n 
\Big )^{-\frac{1}{2}}  \Big \Vert
% \leq  2 \Big (
% \Vert \omega^{-1/2}  { \widetilde G_n } \Vert_{L^2} + \rho^{-1/2} \Vert   
%{\widetilde G_n} \Vert_{L^2} \Big ) 
  \leq  g \rho_{n}^{\frac{1}{2}}
 \: . \label{basic-estimate.3a}
\end{align}
%\textcolor{black}{Verify Constant}
%
%
%%
%%
%
%

\subsection{The Sequence of Ground State Eigenvalues and Projections}\label{inte}

In this section we estimate the distance between consecutive spectral infima $E_{n+1}$ and $E_n$, see Proposition \ref{energies}. Right after we prove that $E_n$ is isolated from the rest of the spectrum of $H_n$, for every $n \in \mathbb{N}_0$. This is achieved in Proposition \ref{gapgood}, which is a direct consequence of Lemma \ref{gapst}. %The main technical tool in the proof of both, Proposition \ref{energies} and Proposition \ref{gapgood}, is Lemma \ref{tecnico}. 
 Of course, the gap estimates ensure the existence of ground state projections, see Eqs.~\eqref{pene} and \eqref{penetilde}. An additional effort permits us to estimate the norm-difference of consecutive projections $P_n$ and $P_{n+1}$ in Proposition \ref{Pro.reg.main}. In particular we prove that $\|P_{n} - P_{n+1} \| < 1$, which implies that all projections $P_{n}$, $n \in \mathbb{N}_0$, are rank-one (see Corollary \ref{rankone}).

\begin{lemma}\label{tecnico}
For every $n \in \mathbb{N}_0$, 
\begin{align}\label{mtat}
 H_{n+1} + \rho_n   \geq H_n +  (1 -   g  )
   \big ( H_\rad(\widetilde \omega_n ) + \rho_n \big  )  
\end{align}
holds true in the sense of quadratic forms.

\end{lemma}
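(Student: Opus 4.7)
The plan is to reduce the claimed form bound to a statement about the relative smallness of $\Phi(\widetilde G_n)$ compared to $H_\rad(\widetilde \omega_n) + \rho_n$, and then invoke the estimate \eqref{basic-estimate.3a} together with the trivial lower bound $H_\rad(\widetilde \omega_n) + \rho_n \geq \rho_n$.

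First, I would decompose $H_{n+1}$ using the additivity of the building blocks on disjoint momentum shells. Since $G_{n+1} = G_n + \widetilde G_n$ and $\omega_{n+1} = \omega_n + \widetilde \omega_n$ with disjoint supports, the field operator and the second-quantized kinetic operator split additively, giving
\begin{equation}
H_{n+1} \ = \ H_n \: + \: \Phi(\widetilde G_n) \: + \: H_\rad(\widetilde \omega_n)
\end{equation}
as operators (identifying $\mathcal{H}_{n+1}$ with $\mathcal{H}_n \otimes \widetilde{\mathcal{F}}_n$ in the natural way). Adding $\rho_n$ and subtracting $H_n$ from both sides of \eqref{mtat}, the lemma is therefore equivalent to the form inequality
\begin{equation}
-\,\Phi(\widetilde G_n) \ \leq \ g \bigl( H_\rad(\widetilde \omega_n) + \rho_n \bigr)\period
\end{equation}

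Next, I would establish this last bound. Writing $X_n := H_\rad(\widetilde \omega_n) + \rho_n$, for any $\psi$ in the form domain I estimate by Cauchy--Schwarz that
\begin{equation}
\bigl| \langle \psi, \Phi(\widetilde G_n) \psi \rangle \bigr|
\ \leq \ \bigl\| \Phi(\widetilde G_n) X_n^{-1/2} \bigr\| \, \bigl\| X_n^{1/2}\psi \bigr\| \, \|\psi\|.
\end{equation}
By \eqref{basic-estimate.3a} the operator norm on the right is bounded by $g\,\rho_n^{1/2}$. Moreover $X_n \geq \rho_n$ implies $\|\psi\| \leq \rho_n^{-1/2}\|X_n^{1/2}\psi\|$. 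Combining these two facts,
\begin{equation}
\bigl| \langle \psi, \Phi(\widetilde G_n) \psi \rangle \bigr|
\ \leq \ g\, \rho_n^{1/2} \cdot \rho_n^{-1/2}\, \bigl\| X_n^{1/2}\psi \bigr\|^2
\ = \ g\, \langle \psi, X_n \psi \rangle\comma
\end{equation}
which is precisely the desired form bound. Plugging this back yields \eqref{mtat}.

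There is no genuine obstacle here; the argument is essentially a one-line application of \eqref{basic-estimate.3a} after recognizing the additive decomposition of $H_{n+1} - H_n$. The only point requiring a little care is to make sure that the $\rho_n$ on the left-hand side of \eqref{mtat} is exactly what is needed to absorb the infrared contribution via the bound $X_n \geq \rho_n$; this is the reason the lemma is stated with the shift $\rho_n$ rather than $\rho_{n+1}$ or no shift at all.
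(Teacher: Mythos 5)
Your proof is correct and follows essentially the same route as the paper: the same additive decomposition $H_{n+1}=H_n+\Phi(\widetilde G_n)+H_\rad(\widetilde\omega_n)$, the same key input \eqref{basic-estimate.3a}, and the same absorption of the shift via $H_\rad(\widetilde\omega_n)+\rho_n\geq\rho_n$. The paper merely packages the final step as positivity of the sandwiched operator $\mathds{1}+X_n^{-1/2}\Phi(\widetilde G_n)X_n^{-1/2}\geq 1-g$ instead of your Cauchy--Schwarz estimate, which is an equivalent bookkeeping of the identical bound.
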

\noindent{\it Proof: }
 Let $\psi \in \mathcal{H}_{n +1}$ be a normalized vector in the domain of $H_{n+1}$. We calculate
\begin{align}\label{mta3}
 \langle  \psi | H_{n+1} \psi \rangle =  \langle  \psi | H_{n} \psi \rangle +  \langle  \psi | \Phi(\widetilde G_n) \psi \rangle +  \langle  \psi | H_\rad(\widetilde \omega_n ) \psi \rangle .
\end{align}
Next, we  set 
\begin{equation}\label{mta3.1} A  :=    1 +   \big ( H_\rad(\widetilde \omega_n ) + \rho_n \big  )^{-1/2} \Phi(\widetilde G_n)
     \big ( H_\rad(\widetilde \omega_n ) + \rho_n \big  )^{-1/2}   
\end{equation}
and notice that      
\begin{align}\label{mta4}
 \langle  \psi |  \Phi(\widetilde G_n) \psi \rangle + &  \big \langle  \psi \big |  \big ( H_\rad(\widetilde \omega_n ) + \rho_n \big  ) \psi 
  \big \rangle \\ = \,  \notag &
   \Big \langle \big ( H_\rad(\widetilde \omega_n ) + \rho_n \big  )^{1/2}  \psi \Big |  A   \big ( H_\rad(\widetilde \omega_n ) + \rho_n \big  )^{1/2} \psi 
  \Big \rangle.
\end{align}
As (see Eq.~\eqref{basic-estimate.3a})
\begin{align}
\Big \|  \big ( H_\rad(\widetilde \omega_n ) + \rho_n \big  )^{-1/2} \Phi(\widetilde G_n)
     \big ( H_\rad(\widetilde \omega_n ) & + \rho_n \big  )^{-1/2}     \Big \| 
   \notag \\ \notag  &\leq  \rho_n^{- \frac{1}{2}}
 \Vert
\Phi(\widetilde G_n)
     \big ( H_\rad(\widetilde \omega_n ) + \rho_n \big  )^{-1/2}  \Vert  
    \notag  \\ & \leq   g, 
\end{align}
we obtain that 
\begin{equation}\label{mta5}
A \geq 1 -   g ,
\end{equation}
and, therefore, using \eqref{mta3.1}  we get 
\begin{align}\label{mta6}
 \langle  \psi |  \Phi(\widetilde G_n) \psi \rangle + &  \big \langle  \psi \big |  \big ( H_\rad(\widetilde \omega_n ) + \rho_n \big  ) \psi 
  \big \rangle \\ \notag & \geq (1 -   g  )
  \langle  \psi \big |  \big ( H_\rad(\widetilde \omega_n ) + \rho_n \big  ) \psi 
  \big \rangle . 
\end{align}
Eqs.~\eqref{mta3} and \eqref{mta6} imply Eq.~\eqref{mtat}.

\qed

The same  argument as in the proof of Lemma \ref{tecnico} yields the following quadratic form estimate.   
\begin{lemma}\label{nuevos}
For every $n \in \mathbb{N}_0$, we have that 
\begin{align}\label{mtatt}
 H + \rho_n   \geq H_n +  (1 -    g  )
   \big ( H_\rad(\widetilde \omega_n ) + \rho_n \big  ).  
\end{align} 

\end{lemma}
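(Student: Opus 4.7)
The plan is to repeat the proof of Lemma \ref{tecnico} almost verbatim, with the full Hamiltonian $H$ in place of $H_{n+1}$. Using the tensor-product identification $\mathcal{H} \cong \mathcal{H}_n \otimes \mathcal{F}[L^2(B_n)]$ and the pointwise decompositions $G = G_n + {\bf 1}_{B_n} G$ and $\omega = \omega_n + {\bf 1}_{B_n} \omega$, one writes
$$H \ = \ H_n + \Phi({\bf 1}_{B_n} G) + H_\rad({\bf 1}_{B_n} \omega).$$
For $\psi$ normalized in the form domain of $H$, the ``completing-the-square'' computation from \eqref{mta3.1}--\eqref{mta4} carries over verbatim with $\widetilde G_n$ replaced by ${\bf 1}_{B_n} G$ and $\widetilde \omega_n$ replaced by ${\bf 1}_{B_n} \omega$.

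The key observation -- and there is no serious obstacle here -- is that the relevant norms of the coupling function on the full ball $B_n$ satisfy exactly the same quantitative bound as on the annulus $B_n \setminus B_{n+1}$. Indeed, a direct polar-coordinates computation using $\|f\|_{L^\infty}\leq 1$ and $\Lambda \leq 1$ shows that $\|{\bf 1}_{B_n} G\|_{L^2}$ and $\|\omega^{-1/2} {\bf 1}_{B_n} G\|_{L^2}$ admit the same bounds in terms of $\rho_n$ that underlie \eqref{basic-estimate.3a}. Therefore Lemma \ref{basic.estimate.1} yields
$$\|\Phi({\bf 1}_{B_n} G)\, (H_\rad({\bf 1}_{B_n} \omega) + \rho_n)^{-1/2} \| \ \leq \ g \, \rho_n^{1/2},$$
which by the same manipulation as in \eqref{mta3.1}--\eqref{mta5} gives $A \geq 1 - g$ for $A := 1 + (H_\rad({\bf 1}_{B_n}\omega) + \rho_n)^{-1/2} \Phi({\bf 1}_{B_n}G) (H_\rad({\bf 1}_{B_n}\omega) + \rho_n)^{-1/2}$, and hence
$$H + \rho_n \ \geq \ H_n + (1-g)\big(H_\rad({\bf 1}_{B_n} \omega) + \rho_n\big).$$

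To close the proof I would invoke monotonicity: since ${\bf 1}_{B_n} \omega = \widetilde \omega_n + {\bf 1}_{B_{n+1}} \omega \geq \widetilde \omega_n \geq 0$ pointwise, one has $H_\rad({\bf 1}_{B_n} \omega) = H_\rad(\widetilde \omega_n) + H_\rad({\bf 1}_{B_{n+1}} \omega) \geq H_\rad(\widetilde \omega_n)$ as a quadratic-form inequality. Combined with $1 - g \geq 0$ for sufficiently small $g$, this gives the claimed inequality. In effect, the (non-negative) contribution of the modes in $B_{n+1}$ is simply discarded in passing from ${\bf 1}_{B_n} \omega$ on the right-hand side to $\widetilde \omega_n$, so the proof differs from that of Lemma \ref{tecnico} only by this final monotonicity step.
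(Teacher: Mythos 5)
Your proof is correct and is essentially the argument the paper intends: the paper's proof of Lemma~\ref{nuevos} consists of the single remark that ``the same argument as in the proof of Lemma~\ref{tecnico}'' applies, and your write-up simply makes explicit that one runs that argument with $F = {\bf 1}_{B_n}G$ and weight ${\bf 1}_{B_n}\omega$ (for which the bounds behind \eqref{basic-estimate.3a} hold with the same constants, since the integrals over $B_n$ are controlled by the same closed-form expressions in $\rho_n$) and then discards the nonnegative contribution $H_\rad({\bf 1}_{B_{n+1}}\omega)\geq 0$ by monotonicity of $d\Gamma$. The only caveat, which you already note, is the standing smallness assumption $g<1$ needed so that the prefactor $1-g$ is nonnegative in the final monotonicity step.
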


\begin{proposition}[Energy Differences] \label{energies}
Suppose  $  g  < 1  $. 
For every $n \in \mathbb{N}_0$, we have that
\begin{equation}\label{mta0}
|E_{n+1} - E_n| \leq  g \rho_n .
\end{equation}
\end{proposition}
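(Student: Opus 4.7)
The plan is to establish the two one-sided bounds $E_{n+1}\le E_{n}$ and $E_{n+1}\ge E_{n}-g\rho_n$ separately; together they give $|E_{n+1}-E_n|\le g\rho_n$. The lower bound is essentially immediate from Lemma~\ref{tecnico}, while the upper bound is a short variational computation that exploits the fact that the interaction $\Phi(\widetilde G_n)$ has vanishing vacuum expectation.

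\textbf{Lower bound.} Starting from the quadratic form inequality
\begin{equation*}
H_{n+1}+\rho_n \;\ge\; H_n + (1-g)\bigl(H_{\rm ph}(\widetilde\omega_n)+\rho_n\bigr)
\end{equation*}
supplied by Lemma~\ref{tecnico}, and using $H_{\rm ph}(\widetilde\omega_n)\ge 0$ together with $1-g>0$, I would drop the field term to obtain
\begin{equation*}
H_{n+1} \;\ge\; H_n - g\,\rho_n
\end{equation*}
as a form inequality on (the intersection of the operator domains in) $\mathcal{H}_{n+1}\cong\mathcal{H}_n\otimes\widetilde{\mathcal F}_n$, where $H_n$ is identified with $H_n\otimes\mathds{1}_{\widetilde{\mathcal F}_n}$. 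Taking the infimum over normalized vectors yields $E_{n+1}\ge E_n - g\rho_n$.

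\textbf{Upper bound.} Using the tensor decomposition $\mathcal{H}_{n+1}\cong\mathcal{H}_n\otimes\widetilde{\mathcal F}_n$, I would write $H_{n+1}=\widetilde H_n+\Phi(\widetilde G_n)$. For $\varepsilon>0$ pick a normalized approximate ground state $\phi_n^{(\varepsilon)}\in\mathcal{H}_n$ for $H_n$ with $\langle\phi_n^{(\varepsilon)}|H_n\phi_n^{(\varepsilon)}\rangle\le E_n+\varepsilon$, and use the trial vector $\Psi_\varepsilon:=\phi_n^{(\varepsilon)}\otimes\widetilde\Omega_n$. Since $H_{\rm ph}(\widetilde\omega_n)\widetilde\Omega_n=0$, the $\widetilde H_n$–expectation reduces to $\langle\phi_n^{(\varepsilon)}|H_n\phi_n^{(\varepsilon)}\rangle\le E_n+\varepsilon$. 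The interaction contributes nothing, because $a(\widetilde G_n)\widetilde\Omega_n=0$ and $a^{*}(\widetilde G_n)\widetilde\Omega_n\in\mathcal{S}_1\widetilde{\mathcal F}_n$ is orthogonal to $\widetilde\Omega_n$; the matrix factor $\sigma_1$ from $\widetilde G_n$ is just a spectator on the atomic tensor slot and does not affect this orthogonality. Therefore $E_{n+1}\le\langle\Psi_\varepsilon|H_{n+1}\Psi_\varepsilon\rangle\le E_n+\varepsilon$, and letting $\varepsilon\to 0$ gives $E_{n+1}\le E_n$.

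\textbf{Combination and obstacles.} The two bounds together give $-g\rho_n\le E_{n+1}-E_n\le 0$, which is the desired estimate. There is no real technical obstacle here: Lemma~\ref{tecnico} does all the work for the lower bound, and the upper bound is a standard vacuum trial-state argument. The only point that demands a moment of care is the identification of the operators $H_n$ and $\widetilde H_n$ across the isomorphism $\mathcal{F}_{n+1}\simeq\mathcal{F}_n\otimes\widetilde{\mathcal F}_n$, together with the fact that the matrix-valued coupling $\widetilde G_n=g\tfrac{\Lambda}{4\pi\sqrt\omega}f\sigma_1\,\mathbf 1_{B_n\setminus B_{n+1}}$ does not spoil the vanishing of the vacuum expectation of $\Phi(\widetilde G_n)$, since $\sigma_1$ acts solely on the atomic factor.
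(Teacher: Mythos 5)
Your proposal is correct and takes essentially the same route as the paper: the lower bound follows from Lemma~\ref{tecnico} after discarding the nonnegative field term, and the upper bound from the vacuum trial state $\phi \otimes \widetilde\Omega_n$, on which both the extra field energy and the interaction $\Phi(\widetilde G_n)$ have vanishing expectation. The only cosmetic difference is that you work with $\varepsilon$-approximate ground states where the paper simply takes the infimum over normalized $\phi$ in the domain of $H_n$.
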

\noindent{\it Proof:}
First notice that, for every normalized vector $\phi \in \mathcal{H}_n$ in the domain of $H_n$,  $ \psi = \phi \otimes \widetilde \Omega_n \in \mathcal{H}_{n +1}$ and 
\begin{align}\label{mta1}
E_{n+1 } \leq  \langle  \psi | H_{n+1} \psi \rangle_{\mathcal{H}_{n+1}}  =  
 \langle  \phi | H_n \phi \rangle_{\mathcal{H}_{n}}. 
\end{align}
Taking the infimum over such $ \phi's  $ we get
\begin{equation}\label{mta2}
E_{n+1} \leq 	E_n. 
\end{equation}
Now we take a normalized vector $\psi $ in the domain of $H_{n+1}$.  We notice that (we recall that $ g < 1$) 
\begin{align}\label{mta6primas}
 (1 -    g  )
   \big ( H_\rad(\widetilde \omega_n ) + \rho_n \big  )   \geq  (1 -    g  ) \rho_n  
\end{align}
and use Lemma \ref{tecnico} to obtain 
\begin{align}\label{mta7}
 \langle  \psi | H_{n+1} \psi \rangle_{\mathcal{H}_{n+1}}  \geq 
  \langle  \psi | H_{n} \psi \rangle_{\mathcal{H}_{n+1}}  -    g \rho_n \geq E_n -   g \rho_n,
\end{align} 
from which we get 
\begin{align}\label{mta8}
E_{n+1} \geq E_n -   g \rho_n. 
\end{align} 
Eqs.~\eqref{mta2} and \eqref{mta8} imply \eqref{mta0}.

\qed

\begin{lemma} \label{gapst}
Suppose  $    g  < \frac{1}{2} \gamma $. 
For every $n \in \mathbb{N}_0$, we have that 
\begin{equation}\label{ok0}
\gap_{n+1} \geq \min \big (\gap_n ,  (1 -    g  )  \rho_{n + 1} \big ) -   g \rho_n .
\end{equation}
\end{lemma}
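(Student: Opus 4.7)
The plan is to combine the operator inequality of Lemma~\ref{tecnico} with a tensor product spectral analysis of the comparison operator. Using the canonical factorization $\mathcal{F}_{n+1} \cong \mathcal{F}_n \otimes \widetilde{\mathcal{F}}_n$ (hence $\mathcal{H}_{n+1} \cong \mathcal{H}_n \otimes \widetilde{\mathcal{F}}_n$), Lemma~\ref{tecnico} rearranges to
\[
H_{n+1} + g\rho_n \; \geq \; K_n \; := \; H_n \otimes \mathds{1}_{\widetilde{\mathcal{F}}_n} + (1-g)\, \mathds{1}_{\mathcal{H}_n} \otimes H_{\rad}(\widetilde\omega_n).
\]
Since the two summands of $K_n$ act on different tensor factors they commute, so $\sigma(K_n)$ is the closed sum of the individual spectra. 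Every photon in the shell $B_n \setminus B_{n+1}$ carries momentum $\geq \rho_{n+1}$, so $\sigma(H_{\rad}(\widetilde\omega_n)) = \{0\} \cup [\rho_{n+1},\infty)$, and $\sigma(H_n) \subseteq \{E_n\} \cup [E_n + \gap_n, \infty)$ by definition of $\gap_n$. Hence
\[
\sigma(K_n) \; \subseteq \; \{E_n\} \cup [E_n + c, \infty), \qquad c \; := \; \min\!\big(\gap_n, \, (1-g)\rho_{n+1}\big).
\]

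Next, using the simplicity of $E_n$ as an eigenvalue of $H_n$ (propagated inductively together with the gap bound, cf.\ the remarks below), the $E_n$-eigenspace of $K_n$ is the range of the rank-one projection $\Pi_0 := P_n \otimes |\widetilde\Omega_n\rangle\langle\widetilde\Omega_n|$, and the spectral theorem for $K_n$ gives $K_n \geq E_n + c\,\Pi_0^\perp$ as operators on $\mathcal{H}_{n+1}$. Combined with the first operator inequality this yields
\[
\langle \psi, H_{n+1} \psi \rangle \; \geq \; (E_n - g\rho_n)\|\psi\|^2 + c\,\|\Pi_0^\perp \psi\|^2, \qquad \forall\, \psi \in \mathcal{H}_{n+1},
\]
and in particular $\langle \psi, H_{n+1} \psi \rangle \geq (E_n + c - g\rho_n)\|\psi\|^2$ whenever $\psi \perp \Ran(\Pi_0)$.

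Because $\Pi_0$ has rank one, the Courant--Fischer min--max principle yields $\mu_2(H_{n+1}) \geq E_n + c - g\rho_n$, where $\mu_2(\cdot)$ denotes the second min--max value. Combining with Proposition~\ref{energies} (which gives $E_{n+1} \leq E_n$) refines this to $\mu_2(H_{n+1}) \geq E_{n+1} + c - g\rho_n$. If $c - g\rho_n \leq 0$ the asserted inequality $\gap_{n+1} \geq c - g\rho_n$ is trivial since $\gap_{n+1} \geq 0$ by definition. Otherwise $\mu_2(H_{n+1}) > E_{n+1}$ forces $E_{n+1}$ to be a simple isolated eigenvalue, so that $\gap_{n+1} = \mu_2(H_{n+1}) - E_{n+1} \geq c - g\rho_n$, as claimed.

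The main technical point is the use of simplicity of $E_n$, which makes $\Pi_0$ rank one and hence allows the min--max principle to control $\mu_2(H_{n+1})$ rather than only a higher eigenvalue index. This is not listed as a hypothesis of the lemma because it is maintained throughout the induction together with the estimate $\gap_n \geq \tfrac{1}{2}\rho_n$ of Proposition~\ref{gapgood} and the rank-one property of $P_n$ established later by the continuity argument $\|P_{n+1} - P_n\| < 1$. The standing assumption $g < \gamma/2$ combined with $\gamma < 1$ ensures both $g < \gamma/(1+\gamma)$ (so $(1-g)\rho_{n+1} > g\rho_n$) and $g < 1/2$ (so $\tfrac{1}{2}\rho_n > g\rho_n$), which together guarantee $c - g\rho_n > 0$ at every step and thereby propagate the simplicity.
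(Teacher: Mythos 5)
Your proof is correct and follows essentially the same route as the paper: both rest on Lemma~\ref{tecnico}, the tensor-product spectral analysis of the comparison operator $H_n + (1-g)H_\rad(\widetilde\omega_n)$, and the min--max principle, together with $E_{n+1}\le E_n$ from Proposition~\ref{energies}. The only difference is presentational -- you make explicit the reliance on the simplicity of $E_n$ (so that $\Pi_0=P_n\otimes P_{\widetilde\Omega_n}$ is rank one and Courant--Fischer controls $\mu_2(H_{n+1})$), a point the paper leaves implicit in its appeal to the min--max principle applied to $H_{n+1}$ and $Q$.
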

\noindent{\it Proof: }
We use the min-max principle to estimate $\gap_{n+1}$ as
\begin{align}\label{ok1}
\gap_{n + 1} = \sup_{\psi \in \mathcal{H}_{n+1} \backslash \{0 \} }  \inf_{\phi \perp \psi, \Vert \phi \Vert =1} \langle \phi \big | (H_{n+1} - E_{n+1}) \phi \rangle , 
\end{align}
where $\phi $ is additionally assumed to lie in the domain of $H_{n+1}$.  We take $\phi $ as in Eq.~\eqref{ok1} and  
 utilize Lemma \ref{tecnico} to obtain : 
\begin{align}\label{ok2}
\langle \phi | (H_{n+1} - E_{n+1}) \phi \rangle \geq & \, \langle \phi | (H_{n} - E_{n}) \phi \rangle + E_n - E_{n +1} - \rho_n   \\  \notag
&  + \, 
   \Big \langle  \phi \Big |   (1 -    g  )
   \big ( H_\rad(\widetilde \omega_n ) + \rho_n \big  )  \phi 
  \Big \rangle
  \\ \notag = &   \Big \langle  \phi \Big | \Big [ H_n - E_n +  (1 -   g  )
    H_\rad(\widetilde \omega_n )   \Big ]  \phi 
  \Big \rangle \\ \notag &  + E_n - E_{n +1} -   g  \rho_n. 
\end{align}
We temporarily set
\begin{equation}\label{temp}
Q := H_n - E_n +  (1 -    g  )
  H_\rad(\widetilde \omega_n )
\end{equation}
and observe that $\inf \sigma(Q) =0$. Denoting by $\gap(Q)$ the distance between $0$ and the rest of the spectrum of $Q$, we arrive at 
\begin{equation}\label{cas}
\gap_{n +1} \geq \gap(Q)  + E_n - E_{n +1} -   g  \rho_n,
\end{equation}
where we use \eqref{ok2}, \eqref{temp}, and the min-max principle applied to $H_{n+1}$ and $Q$. Using the fact that $H_{n}$ and $  H_\rad(\widetilde \omega_n )  $ act on different factors in the tensor product decomposition $\mathcal{H}_{n+1}
=\mathcal{H}_{n} \otimes \widetilde{\mathcal{F}}_{n} $, we readily get  
\begin{equation}\label{gapq}
\sigma(Q) \subset \{ 0 \} \cup \Big [ \min \big (\gap_n ,  (1 -    g  )  \rho_{n + 1} \big ), \infty \Big ) 
\end{equation}
(recall that $ g <1$) and, therefore, 
\begin{equation}\label{gapq2}
\gap(Q) \geq \min \big (\gap_n ,  (1 -    g  )  \rho_{n + 1} \big ). 
\end{equation}
Eqs.~\eqref{mta2}, \eqref{cas} and \eqref{gapq2} imply Eq.~\eqref{ok0}, here we use that $  g < \frac{1}{2}\gamma$.

\qed

\begin{remark}\label{gaptilde}
We notice that the spectral theorem directly implies that,  
for every $n \in \mathbb{N}_0$: 
\begin{equation}\label{ok4}
\widetilde{\gap}_{n} = \min \big ( \gap_{n },  \rho_{n + 1} \big ).  
\end{equation}
\end{remark}
As $G_0 = 0$, the spectrum of $H_{0}$ can be  calculated explicitly, with the result that 
\begin{align}
\sigma (H_0) = \{ 0  \} \cup [ \rho_0, \infty ),  
\end{align} \label{step0}
and, therefore, 
\begin{align}\label{gap0}
\gap_0 = \rho_0 = \kappa. 
\end{align}
We simplify Eq.~\eqref{ok0}  by assuming some hypothesis on $g$ and $\gamma$. Taking, for example, $  g < \frac{1}{4} \gamma  $ and $\gamma < \frac{1}{2}$, we inductively obtain, from \eqref{ok0}, that
$
\gap_{n } \geq \frac{1}{2} \rho_n$, for all $ n \in \mathbb{N}_0$. It also follows, from Remark \ref{gaptilde}, that $ \widetilde{\gap}_{n} = \rho_{n+1}  $.
Then, we arrive at the following proposition. 
\begin{proposition}[Gaps] \label{gapgood}
Suppose that $  g < \frac{1}{4} \gamma  $ and $\gamma < \frac{1}{2}$. Then 
\begin{equation}\label{ssss}
\gap_{n } \geq \frac{1}{2} \rho_n, \hspace{1cm}   \widetilde{\gap}_{n} = \rho_{n+1} , 
\end{equation}
for all $n \in \mathbb{N}_0$.
\end{proposition}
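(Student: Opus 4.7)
The plan is to prove both claims simultaneously by induction on $n \in \NN_0$, feeding the established machinery (Lemma \ref{gapst} and Remark \ref{gaptilde}) with the two arithmetic hypotheses $g < \frac{1}{4}\gamma$ and $\gamma < \frac{1}{2}$.

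For the base case $n=0$, the interaction vanishes ($G_0 = 0$), so the spectrum of $H_0$ is computed explicitly in \eqref{gap0}, giving $\gap_0 = \rho_0 \geq \frac{1}{2}\rho_0$. Combining this with Remark \ref{gaptilde} and the strict inequality $\gamma < \frac{1}{2}$ (so $\rho_1 = \gamma \rho_0 < \rho_0 \leq \gap_0$), we obtain $\widetilde{\gap}_0 = \min(\gap_0, \rho_1) = \rho_1$.

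For the inductive step, assume $\gap_n \geq \frac{1}{2}\rho_n$. The key observation is that the inductive hypothesis together with $\gamma < \frac{1}{2}$ implies
\begin{equation*}
\gap_n \;\geq\; \tfrac{1}{2}\rho_n \;=\; \tfrac{1}{2\gamma}\rho_{n+1} \;>\; \rho_{n+1} \;\geq\; (1-g)\rho_{n+1},
\end{equation*}
so that the minimum appearing in Lemma \ref{gapst} is realized by the second argument. Therefore
\begin{equation*}
\gap_{n+1} \;\geq\; (1-g)\rho_{n+1} \,-\, g\rho_n \;=\; \Bigl(1 - g - \tfrac{g}{\gamma}\Bigr)\rho_{n+1}.
\end{equation*}
Using $g < \frac{1}{4}\gamma$ (which forces $g/\gamma < 1/4$) and $\gamma < 1/2$ (whence $g < 1/8$), we estimate $g + g/\gamma < \frac{1}{8} + \frac{1}{4} < \frac{1}{2}$, so $\gap_{n+1} \geq \frac{1}{2}\rho_{n+1}$, closing the induction.

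Finally, the statement on $\widetilde{\gap}_n$ follows directly from Remark \ref{gaptilde}: once $\gap_n \geq \frac{1}{2}\rho_n = \frac{\rho_{n+1}}{2\gamma}$ is known and $\gamma < \frac{1}{2}$ gives $\frac{1}{2\gamma} > 1$, we get $\gap_n > \rho_{n+1}$, so $\widetilde{\gap}_n = \min(\gap_n,\rho_{n+1}) = \rho_{n+1}$. There is no genuine obstacle here; the only point requiring any care is bookkeeping the constants so that the error term $g\rho_n = (g/\gamma)\rho_{n+1}$ produced by Lemma \ref{gapst} is strictly dominated by the gain $(1-g)\rho_{n+1}$ down to the target $\frac{1}{2}\rho_{n+1}$, which is exactly what the hypothesis $g < \frac{1}{4}\gamma$ was tailored to ensure.
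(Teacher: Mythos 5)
Your proof is correct and follows exactly the paper's argument: the base case from the explicit spectrum of $H_0$ (so $\gap_0 = \rho_0$), an induction via Lemma \ref{gapst} in which the minimum is realized by $(1-g)\rho_{n+1}$ and the error $g\rho_n = (g/\gamma)\rho_{n+1}$ is absorbed using $g < \frac{1}{4}\gamma$ and $\gamma < \frac{1}{2}$, and Remark \ref{gaptilde} for $\widetilde{\gap}_n$. The paper only sketches this induction in the text preceding the proposition; you have simply filled in the same arithmetic.
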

Proposition \ref{gapgood} permits us to define $P_n$  as in \eqref{pene}. It also allows defining 
\begin{equation}\label{penetilde}
\widetilde P_n : =  \frac{-1}{2\pi i } \int_{\Gamma_{n+1}} \frac{dz}{ \widetilde H_n - z},
\end{equation}
where the contour $\Gamma_n$ is defined in Eq.~\eqref{gamman}, provided $g > 0$ obeys $   g < \frac{1}{64}  \gamma   $, because in this case $E_n$ is the only spectral point of $ \widetilde H_n $ in the interior of $\Gamma_{n +1}$ (see Propositions \ref{energies} and \ref{gapgood}). 
Notice that
\begin{equation}\label{impo}
\widetilde P_n = P_n \otimes \widetilde P_{\Omega_n} . 
\end{equation}
In the next lemma we estimate the norm-difference of
$ P_{n+1} $  and $\widetilde P_n $. 
\begin{proposition} \label{Pro.reg.main} Suppose that $  g < \frac{1}{64} \gamma  $ and $\gamma < \frac{1}{2}$. Then
\begin{equation}\label{projections}
 \| P_{n+1} - \widetilde P_n \| \leq   \frac{16}{\gamma}  g \leq \frac{1}{4}.
\end{equation}
\end{proposition}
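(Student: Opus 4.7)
The plan is to express $P_{n+1} - \widetilde P_n$ as a single contour integral via the second resolvent identity and then to estimate the integrand uniformly on $\Gamma_{n+1}$. Since $H_{n+1} = \widetilde H_n + \Phi(\widetilde G_n)$, writing $(H_{n+1}-z)^{-1} - (\widetilde H_n - z)^{-1} = -(H_{n+1}-z)^{-1}\Phi(\widetilde G_n)(\widetilde H_n - z)^{-1}$ yields
\begin{equation*}
P_{n+1} - \widetilde P_n \ = \ \frac{1}{2\pi i}\int_{\Gamma_{n+1}} (H_{n+1}-z)^{-1}\,\Phi(\widetilde G_n)\,(\widetilde H_n - z)^{-1}\, dz.
\end{equation*}
Hence $\|P_{n+1} - \widetilde P_n\|$ is bounded by $\frac{1}{2\pi}\cdot \mathrm{length}(\Gamma_{n+1}) = \rho_{n+1}/8$ times the supremum of the integrand's norm over $\Gamma_{n+1}$.

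First I would bound $\|(H_{n+1}-z)^{-1}\|$. On $\Gamma_{n+1}$ we have $|z - E_{n+1}| = \rho_{n+1}/8$, and by Proposition \ref{gapgood} the gap above $E_{n+1}$ is at least $\rho_{n+1}/2$, so $\mathrm{dist}(z, \sigma(H_{n+1})) \geq \rho_{n+1}/8$ and $\|(H_{n+1}-z)^{-1}\| \leq 8/\rho_{n+1}$.

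The second, more delicate factor I would factor as
\begin{equation*}
\Phi(\widetilde G_n)(\widetilde H_n - z)^{-1} \ = \ \Phi(\widetilde G_n)\bigl(H_\rad(\widetilde\omega_n)+\rho_n\bigr)^{-1/2} \cdot \bigl(H_\rad(\widetilde\omega_n)+\rho_n\bigr)^{1/2}(\widetilde H_n - z)^{-1}.
\end{equation*}
The first factor is bounded by $g\sqrt{\rho_n}$ via the standard estimate \eqref{basic-estimate.3a}. The second factor is the crux: since $\widetilde H_n = H_n \otimes \mathds{1} + \mathds{1}\otimes H_\rad(\widetilde\omega_n)$, the two operators commute, and joint spectral calculus reduces its norm to $\sup_{x,y} (y+\rho_n)^{1/2}/|x+y-z|$, where $x \in \sigma(H_n) \subset \{E_n\}\cup [E_n + \rho_n/2,\infty)$ and $y \in \sigma(H_\rad(\widetilde\omega_n)) \subset \{0\}\cup[\rho_{n+1},\infty)$.

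Here I expect the main obstacle: the vacuum contribution $(x,y)=(E_n,0)$ is the dominant and most singular case, because $\sqrt{\rho_n}/|E_n-z|$ does not carry a smallness factor in $\rho_{n+1}$. Using Proposition \ref{energies} and $g < \gamma/64$, $|z-E_n| \geq |z-E_{n+1}| - |E_{n+1}-E_n| \geq \rho_{n+1}/8 - g\rho_n \geq 7\rho_{n+1}/64$, so this case contributes at most $64/(7\gamma\sqrt{\rho_n})$. The remaining cases ($x \geq E_n + \rho_n/2$, or $y \geq \rho_{n+1}$) I would dispatch by noting that then $|x+y-z|$ is bounded below by a constant multiple of $\rho_n + y$, which dominates $\sqrt{\rho_n + y}$ and hence yields a smaller contribution.

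Finally I would multiply the estimates together:
\begin{equation*}
\|P_{n+1} - \widetilde P_n\| \ \leq \ \frac{1}{2\pi}\cdot \frac{\pi\rho_{n+1}}{4} \cdot \frac{8}{\rho_{n+1}} \cdot g\sqrt{\rho_n} \cdot \frac{64}{7\gamma\sqrt{\rho_n}} \ = \ \frac{64 g}{7\gamma} \ \leq \ \frac{16 g}{\gamma},
\end{equation*}
and the hypothesis $g < \gamma/64$ then gives the second inequality $16 g/\gamma \leq 1/4$. The only conceptual ingredients beyond the standard resolvent estimate are the gap bound from Proposition \ref{gapgood}, the energy separation from Proposition \ref{energies}, and the commutativity of $H_n$ with $H_\rad(\widetilde\omega_n)$ allowing the joint spectral argument.
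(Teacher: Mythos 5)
Your proposal is correct and follows essentially the same route as the paper: the same second-resolvent-identity contour integral, the same factorization of $\Phi(\widetilde G_n)(\widetilde H_n - z)^{-1}$ through $(H_\rad(\widetilde\omega_n)+\rho_n)^{\pm 1/2}$ using the standard bound \eqref{basic-estimate.3a}, and the same joint spectral-calculus estimate in which the vacuum sector $(s,r)=(E_n,0)$ is the dominant contribution (cf.\ Eqs.~\eqref{estap}--\eqref{non.nece}). Your constants differ marginally from the paper's but land on the same final bound $16g/\gamma$.
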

\noindent{\it Proof:}
The second inequality in \eqref{projections} is obvious. 
We use the second resolvent identity and \eqref{pene} and \eqref{penetilde} to get
\begin{align}\label{pro1}
P_{n+1} - \widetilde P_n = \frac{1}{ 2 \pi i} \int_{\Gamma_{n+1}} 
  \frac{1}{H_{n+1} -z}   \Phi(\widetilde G_n)  \frac{1}{\widetilde H_{n} -z} dz.   
\end{align} 
Next we estimate 
\begin{align}\label{we}
\Big \|     \Phi(\widetilde G_n)  \frac{1}{\widetilde H_{n} -z}  \Big \| & \leq
\big \|     \Phi(\widetilde G_n)   \big ( H_\rad  (\widetilde \omega_n) + \rho_n  \big )^{-1/2}  \big \|
 \\ \notag & \hspace{3cm} \cdot \Big \| \big ( H_\rad(\widetilde \omega_n) + \rho_n  \big )^{1/2}   \frac{1}{\widetilde H_{n} -z}  \Big \| 
 \\ \notag & \leq   g \rho_{n }^{1/2}\Big \| \big ( H_\rad(\widetilde \omega_n) + \rho_n  \big )^{1/2}   \frac{1}{\widetilde H_{n} -z}  \Big \|, 
\end{align}
where we use \eqref{basic-estimate.3a}. Functional calculus of self-adjoint operators allows us to compute  
\begin{equation}\label{estap}
\Big \| \big ( H_\rad(\widetilde \omega_n) + \rho_n  \big )^{1/2}   \frac{1}{\widetilde H_{n} -z}  \Big \| = \sup_{s \in \sigma(H_n)}\sup_{r \in \{0\} \cup [ \rho_{n+1}, \infty ) } \Big | (r + \rho_n)^{1/2} \frac{1}{ s + r - z} \Big |. 
\end{equation}
The definition of $\Gamma_{n+1}$,
\begin{align} \label{gammanprima}
\Gamma_{n+1} := \Big\{ z \in \mathbb{C} \, \, \Big\vert \, \, \vert z - E_{n+1} \vert = \frac{1}{8} \rho_{n +1} \Big\} \,  
\end{align}
(see \eqref{gamman}),
Propositions \ref{energies} and \ref{gapgood}, and  $  g \rho_n \leq \frac{\gamma}{64}\rho_n =  \frac{1}{64}  \rho_{n+1}$, then lead us to
\begin{align}\label{non.nece}
|s - z| & \geq \frac{1}{16} \rho_{n+1}, \hspace{4cm} \forall s \in \sigma(H_n), 
\\
\notag | s+r - z| &  \geq \frac{r}{2} \geq \frac{1}{2} \Big (  \frac{r}{2} + \frac{\rho_{n+1}}{2}  \Big ), \hspace{2cm}  \forall s \in \sigma(H_n), r \in [\rho_{n+1}, \infty). 
\end{align}
Eqs.~\eqref{estap} and \eqref{non.nece} imply that
\begin{equation}\label{estap1}
\Big \| \big ( H_\rad(\widetilde \omega_n) + \rho_n  \big )^{1/2}   \frac{1}{\widetilde H_{n} -z}  \Big \| \leq \frac{16}{\gamma \rho_n^{1/2}}, 
\end{equation}
while \eqref{we} and \eqref{estap1} imply that 
\begin{equation}\label{casi}
\Big \|    \frac{1}{H_{n+1} -z}   \Phi(\widetilde G_n)  \frac{1}{\widetilde H_{n} -z}  \Big \| \leq  \frac{128  g}{\gamma \rho_{n+1}}.  
\end{equation}
We prove \eqref{projections} using \eqref{pro1}, \eqref{casi}, and the definition of $\Gamma_{n+1}$.

\qed

\begin{corollary}\label{rankone}
Suppose that $  g < \frac{1}{64} \gamma  $ and $\gamma < \frac{1}{2}$. Then $P_n$ is a rank-one orthogonal projection, for every $ n \in \mathbb{N}_0 $. 
\end{corollary}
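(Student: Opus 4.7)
The plan is to proceed by induction on $n$, using two standard facts: (i) the Riesz projection of a self-adjoint operator onto an isolated spectral point is an orthogonal projection, and (ii) two orthogonal projections $P,Q$ on a Hilbert space with $\|P-Q\|<1$ necessarily have the same rank. The analytic content has already been packaged in Proposition~\ref{Pro.reg.main}; what remains is to chain its estimate through the induction.

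For the base case $n=0$, the ultraviolet cutoff in \eqref{eq.7prima} forces $G$ to be supported in $B_0$, so $G_0 = \mathds{1}_{\mathbb{R}^3 \setminus B_0}\,G \equiv 0$ and $H_0 = H_{\rm at} + H_{\rm rad}(\omega_0)$ is a sum of commuting self-adjoint operators acting on different tensor factors. Since $H_{\rm at}$ in \eqref{eq.6} has a non-degenerate lowest eigenvalue $0$ and $H_{\rm rad}(\omega_0) \ge 0$ on $\mathcal{F}_0$ has the Fock vacuum $\Omega_0$ as its unique normalized ground state (also at eigenvalue $0$), the ground state of $H_0$ is non-degenerate and $P_0$ is the rank-one orthogonal projection onto its span. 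This is consistent with \eqref{step0} and \eqref{gap0}.

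For the inductive step, assume $P_n$ is a rank-one orthogonal projection. By \eqref{impo},
\begin{equation*}
\widetilde P_n \;=\; P_n \otimes P_{\widetilde \Omega_n}
\end{equation*}
is a tensor product of rank-one orthogonal projections, hence itself a rank-one orthogonal projection on $\mathcal{H}_{n+1} = \mathcal{H}_n \otimes \widetilde{\mathcal{F}}_n$. By Proposition~\ref{gapgood}, $E_{n+1}$ is an isolated eigenvalue of the self-adjoint operator $H_{n+1}$, so the Riesz integral in \eqref{pene} yields an orthogonal projection $P_{n+1}$. Proposition~\ref{Pro.reg.main} then gives
\begin{equation*}
\|P_{n+1} - \widetilde P_n\| \;\le\; \tfrac{1}{4} \;<\; 1,
\end{equation*}
and fact (ii) forces $\mathrm{rank}(P_{n+1}) = \mathrm{rank}(\widetilde P_n) = 1$, completing the induction.

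There is no serious obstacle here: the corollary is essentially a topological consequence of Proposition~\ref{Pro.reg.main}, relying only on the elementary fact that the rank is locally constant on the set of orthogonal projections, so that the strict inequality $\|P_{n+1} - \widetilde P_n\| < 1$ prevents the rank from jumping along the induction.
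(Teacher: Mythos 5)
Your proof is correct and follows essentially the same route as the paper's own argument: establish that $P_0$ (hence $\widetilde P_0$) is a rank-one orthogonal projection because $G_0=0$, then propagate the rank inductively via Proposition~\ref{Pro.reg.main} and the fact that orthogonal projections at distance strictly less than $1$ have equal rank. The only difference is that you spell out the base case and the rank-stability fact a bit more explicitly than the paper does.
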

\noindent{\it Proof: }
As $G_0 = 0$, it is straightforward to verify that $P_0$  and hence $\widetilde P_0$ are of rank-one. Proposition \ref{Pro.reg.main} implies that 
$\| P_1 - \widetilde P_0 \| < 1 $ and, therefore, $P_1$ is of rank-one, too. We proceed inductively to conclude that $P_n$ is of rank one for every $n \in \mathbb{N}_0$. The self-adjointness of $H_n$ ensures the self-adjointness of $P_n$.       
\qed

\begin{remark}\label{reg}
In case that $ G $ was infrared regular, i.e.,  if  $ \| |k|^{1/2 - \mu} G(k)\|$ was bounded at $k=0$, for some $\mu > 0$, then in Eq.~\eqref{we} and, consequently, in Eq.~\eqref{projections} we would gain a positive power of $\rho_n$. This would immediately imply the convergence of the projections $\{ P_n^{\infty} \}_{n \in \mathbb{N}_0}$, see Eq.~\eqref{extra}, which in turn implies the existence of the ground state. In other words: If we had an infrared regular interaction, this section would basically contain the proof of the existence of a ground state, our main result . Not assuming infrared regularity complicates the matter significantly. The next section addresses this complication.      
\end{remark}

\subsection{Invariant Subspaces Due to the Symmetry}\label{symmetry}

In this section we present a new conserved quantity in the spin-boson model with off-diagonal interaction. We prove that the fact that  $\sigma_1$ maps the ground state eigenspace corresponding to $H_{\rm at}$ to its orthogonal complement (it is off-diagonal) is preserved by the flow of operators 
$\{ H_{n} \}_{n  \in \mathbb{N}_0} $, i.e., we prove that $P_n \sigma_1 P_n = 0$, for all $n$, see Lemma \ref{key}. This is achieved with the help of a symmetry operator $\mathcal{S}$, see \eqref{mathcals} which we prove to commute with $H_n$ ($n \in \mathbb{N}_0$), i.e., it gives a new conserved quantity of the model.

The photon number operator defined on $ H_n $ is also denoted by $ \mathcal{N}_\rad = H_\rad(  {\bf 1}_ { \mathbb{R}^3 \backslash B_n })  $, see \eqref{numberfull}. Of course, $ \mathcal{N}_\rad $ depends on $n$, but we omit this in our notation. We use the standard representation for the Pauli matrices $\sigma_1, \sigma_1, \sigma_3 $ and denote by
$\mathcal{S}$ the following operator on $\mathcal{H}_n = \mathcal{H}_{\rm at} \otimes \mathcal{F}_n$ :  
\begin{align}\label{mathcals}
\cS : = \sigma_3  (-1 )^{\mathcal{N}_\rad}. 
\end{align}
\begin{lemma} \label{key}
For every $n\in \mathbb{N}_0$, we have that 
\begin{align}
P_n \sigma_1 P_n = 0 \, \, , \, \, \,  \widetilde{P}_n \sigma_1 \widetilde{P}_n =0 \, . \label{proj.tr.1}
\end{align}
\end{lemma}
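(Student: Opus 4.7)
\textit{Proof plan.} The strategy is to exploit the operator $\mathcal{S}=\sigma_3(-1)^{\mathcal{N}_\rad}$, which is self-adjoint and unitary with $\mathcal{S}^2=\mathds{1}$, so the Hilbert space $\mathcal{H}_n$ decomposes as an orthogonal direct sum $\mathcal{H}_n=\mathcal{H}_n^{+}\oplus\mathcal{H}_n^{-}$ of the $\pm 1$-eigenspaces of $\mathcal{S}$.

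The first task is to verify that $[\mathcal{S},H_n]=0$ (and the analogous statement for $\widetilde H_n$). Since $H_{\rm at}$ is diagonal it commutes with $\sigma_3$, and it acts trivially on the Fock factor, so it commutes with $\mathcal{S}$. Similarly $H_\rad(\omega_n)$ preserves the photon number, so it commutes with $(-1)^{\mathcal{N}_\rad}$ and acts trivially on the atomic factor. For the interaction $\Phi(G_n)$, I would use the off-diagonal structure $G_n\propto \sigma_1$: the identity $\sigma_3\sigma_1=-\sigma_1\sigma_3$ produces one sign flip, while $a^*(h)$ and $a(h)$ raise or lower the photon number by one and therefore satisfy $a^\#(h)(-1)^{\mathcal{N}_\rad}=-(-1)^{\mathcal{N}_\rad}a^\#(h)$, producing a second sign flip; the two flips cancel and yield $[\mathcal{S},\Phi(G_n)]=0$. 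The same argument applies to $\widetilde H_n$ since it is of the same structural form on $\mathcal{H}_{n+1}$.

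Once $\mathcal{S}$ is known to commute with $H_n$ (respectively $\widetilde H_n$), its unitarity implies that $\mathcal{S}$ commutes with the resolvents $(H_n-z)^{-1}$ and $(\widetilde H_n-z)^{-1}$, and thus, by the Riesz integral definitions \eqref{pene} and \eqref{penetilde}, with $P_n$ and $\widetilde P_n$. By Corollary~\ref{rankone}, $P_n=|\phi_n\rangle\langle\phi_n|$ is a rank-one orthogonal projection; since its range is $\mathcal{S}$-invariant and $\mathcal{S}^2=\mathds{1}$, there exists $\eta_n\in\{+1,-1\}$ with $\mathcal{S}\phi_n=\eta_n\phi_n$, i.e.\ $\phi_n\in\mathcal{H}_n^{\eta_n}$.

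Finally I would use the anticommutation $\sigma_1\mathcal{S}=\sigma_1\sigma_3(-1)^{\mathcal{N}_\rad}=-\sigma_3\sigma_1(-1)^{\mathcal{N}_\rad}=-\mathcal{S}\sigma_1$, which shows that $\sigma_1$ maps $\mathcal{H}_n^{+}$ into $\mathcal{H}_n^{-}$ and vice versa. In particular $\sigma_1\phi_n\in\mathcal{H}_n^{-\eta_n}\perp\mathcal{H}_n^{\eta_n}$, so $\langle\phi_n,\sigma_1\phi_n\rangle=0$ and hence $P_n\sigma_1 P_n=\langle\phi_n,\sigma_1\phi_n\rangle\,|\phi_n\rangle\langle\phi_n|=0$. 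The identity $\widetilde P_n\sigma_1\widetilde P_n=0$ follows either by repeating the argument verbatim on $\mathcal{H}_{n+1}$, or from \eqref{impo} together with $\sigma_1\otimes\mathds{1}$ acting trivially on the Fock factor: $\widetilde P_n\sigma_1\widetilde P_n=(P_n\sigma_1 P_n)\otimes P_{\widetilde\Omega_n}=0$. The only genuine point requiring care is the verification of $[\mathcal{S},\Phi(G_n)]=0$, which is precisely where the off-diagonal hypothesis on $G$ enters; for a coupling function with a non-trivial diagonal component the two sign flips would no longer cancel and the whole scheme would collapse.
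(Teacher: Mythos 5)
Your proof is correct and follows essentially the same route as the paper: both hinge on the operator $\mathcal{S}=\sigma_3(-1)^{\mathcal{N}_\rad}$ commuting with $H_n$ (hence with $P_n$), anticommuting with $\sigma_1$, and on $P_n$ being rank-one by Corollary~\ref{rankone}. The only cosmetic difference is the last step: the paper conjugates by $\mathcal{S}$ inside a trace to get $\Tr(P_n\sigma_1 P_n)=-\Tr(P_n\sigma_1 P_n)=0$, while you diagonalize $\mathcal{S}$ on the one-dimensional range of $P_n$ and observe $\sigma_1\phi_n\perp\phi_n$ -- two equivalent ways of exploiting the same structure.
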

\noindent{\it Proof:} The second equality in \eqref{proj.tr.1} follows from Eq.~\eqref{impo} and the first equality. We prove that $P_n \sigma_1 P_n =0$. 
A direct computation shows that   $ [ \sigma_3 , H_{\rm at}] = 0 $ (here $[\cdot, \cdot]$ denotes the commutator) . Furthermore, using the pull-through formulae,  
\begin{align} \label{pull}
a(k) \mathcal{N}_{ph} = (\mathcal{N}_{ph} +1) a(k) \ \ \ \text{and} \ \ \ a^{*}(k) ( \mathcal{N}_{ph} +1) =  \mathcal{N}_{ph} a^{*}(k) \, ,
\end{align}
we conclude that
\begin{align}\label{comm}
[\cS, H_n ] = 0. 
\end{align}
As $\mathcal{S}^2 = \mathds{1}$, we observe that  
\begin{equation}\label{comm1}
{\rm Tr}( \mathcal{S} P_n \sigma_1 P_n \mathcal{S} )  = {\rm Tr}(  P_n \sigma_1 P_n  ).
\end{equation}
Eq.~\eqref{comm} implies that $\cS$ commutes with $P_n $ and it is straightforward to verify  that it anti-commutes with $\sigma_1$. Hence we have  
\begin{equation}\label{comm2}
{\rm Tr}( \mathcal{S} P_n \sigma_1 P_n \mathcal{S} )  = -{\rm Tr}( \mathcal{S}^2 P_n \sigma_1 P_n  ) =  -{\rm Tr}(  P_n \sigma_1 P_n  ).
\end{equation}
Then we obtain from \eqref{comm1} and \eqref{comm2} that $  {\rm Tr}(  P_n \sigma_1 P_n  ) = 0$. Since $P_n$ is a rank-one projection, we conclude that $P_n \sigma_1 P_n =0$.  

\qed

\subsection{Further Estimates}\label{fur}
In this section we derive some estimates that are consequences of the computations in the present section, but will be used in our main section, Section \ref{Convergence}.   

\begin{lemma}\label{fur1}
Suppose that  $  g < \frac{1}{64} \gamma  $ and $\gamma < \frac{1}{2}$.
Take $z \in \Gamma_{n+1}$ (see Eq.~\eqref{gamman}).  The following norm bounds hold true,
\begin{align}
\Big \|\Phi(\widetilde G_n) \frac{1}{\widetilde H_n - z}   \Big \| \leq   \frac{16}{\gamma}  g \, , \label{basic-estimate.imp.1} \\ \notag \\ 
\Big \|\Phi(\widetilde G_n) \frac{1}{ H_{n+1} - z}   \Big \| \leq \frac{32}{\gamma}    g \, .
\end{align}
\end{lemma}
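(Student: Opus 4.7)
The first estimate is essentially a re-statement of the norm bound $\|\Phi(\widetilde G_n)(\widetilde H_n - z)^{-1}\| \leq 16 g / \gamma$ that is already derived inside the proof of Proposition~\ref{Pro.reg.main}. My plan is simply to isolate that step: I would factor
\[
\Phi(\widetilde G_n) \frac{1}{\widetilde H_n - z}
\ = \
\Phi(\widetilde G_n) \big(H_\rad(\widetilde \omega_n) + \rho_n\big)^{-1/2} \cdot \big(H_\rad(\widetilde \omega_n) + \rho_n\big)^{1/2} \, \frac{1}{\widetilde H_n - z},
\]
bound the first factor by $g \, \rho_n^{1/2}$ using \eqref{basic-estimate.3a}, and bound the second factor by $16/(\gamma \rho_n^{1/2})$ via the functional-calculus estimate \eqref{estap}--\eqref{estap1}. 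The latter relies on Propositions \ref{energies} and \ref{gapgood} to ensure that $z \in \Gamma_{n+1}$ sits at distance at least $\rho_{n+1}/16$ from $\sigma(\widetilde H_n)$. Multiplying the two factors, the powers of $\rho_n$ cancel and the first claim follows.

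For the second estimate the natural tool is the second resolvent identity applied to the pair $H_{n+1} = \widetilde H_n + \Phi(\widetilde G_n)$, namely
\[
\frac{1}{H_{n+1} - z} \ = \ \frac{1}{\widetilde H_n - z} \ - \ \frac{1}{\widetilde H_n - z} \, \Phi(\widetilde G_n) \, \frac{1}{H_{n+1} - z}.
\]
Left-multiplying by $\Phi(\widetilde G_n)$ and setting
\[
X \ := \ \Phi(\widetilde G_n) \, \frac{1}{H_{n+1} - z}, \qquad Y \ := \ \Phi(\widetilde G_n) \, \frac{1}{\widetilde H_n - z},
\]
produces the operator identity $X = Y - Y X$, so that $\|X\| \leq \|Y\| + \|Y\| \, \|X\|$. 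The first estimate already gives $\|Y\| \leq 16 g/\gamma$, and the hypothesis $g < \gamma/64$ forces $\|Y\| \leq 1/4 \leq 1/2$. Hence $\|X\|(1 - \|Y\|) \leq \|Y\|$ rearranges to $\|X\| \leq 2\|Y\| \leq 32 g/\gamma$, which is exactly the second claim.

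There is no genuine obstacle here: both inequalities are essentially by-products of the resolvent estimates developed for Proposition~\ref{Pro.reg.main}, packaged with a one-step Neumann argument for the second bound. The only subtle point is the arithmetic — the factor $1/(1-\|Y\|)$ must be absorbed into the constant $32$, which is precisely what the smallness condition $g < \gamma/64$ guarantees with room to spare. The reason for stating the lemma separately is that these two bounds are the exact inputs needed in the convergence analysis of the main Section \ref{Convergence}.
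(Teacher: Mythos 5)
Your proposal is correct and follows essentially the same route as the paper: the first bound is obtained by the same factorization through $\bigl(H_\rad(\widetilde\omega_n)+\rho_n\bigr)^{1/2}$ combined with \eqref{basic-estimate.3a} and \eqref{estap1}, and the second bound differs only cosmetically, since your single application of the second resolvent identity with the rearrangement $\|X\|\le\|Y\|/(1-\|Y\|)\le 2\|Y\|$ is just the summed form of the Neumann series the paper writes out explicitly.
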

\noindent{\it Proof:}
We use that, see \eqref{basic-estimate.3a},
\begin{equation}
\Big \| \Phi(\widetilde G_n)  \Big (H_\rad 
\big (  \widetilde \omega_n) + \rho_n \Big ) ^{-1/2}   \Big \| \leq    g \rho_n^{1/2} \, ,
\end{equation}
and Eq.~\eqref{estap1} to prove the first inequality in \eqref{basic-estimate.imp.1}. To prove the second inequality we use a Neumann series,
\begin{align}\label{yano}
\Phi(\widetilde G_n) \frac{1}{ H_{n+1} - z}  = \Phi(\widetilde G_n) \frac{1}{ \widetilde H_{n} - z} \sum_{n=0}^{\infty} \Big ( - \Phi(\widetilde G_n) \frac{1}{ \widetilde H_{n} - z}  \Big )^n,  
\end{align}  
 the first inequality in \eqref{basic-estimate.imp.1}, and the fact that $   \frac{16}{\gamma} g \leq \frac{1}{2} $.    
\qed
\begin{corollary}\label{fur1c}
Suppose that  $  g < \frac{1}{64} \gamma  $ and $\gamma < \frac{1}{2}$.
Take $z $ in the interior of $ \Gamma_{n+1}$ (see Eq.~\eqref{gamman}). Then the following estimates hold true
\begin{align}
\Big \|\Phi(\widetilde G_n) \frac{1}{\widetilde H_n - z} (1 - \widetilde P_n)   \Big \| \leq \frac{16}{\gamma}    g , \\ \notag \\ 
\Big \|\Phi(\widetilde G_n) \frac{1}{ H_{n+1} - z}  (1 - P_{n+1}) \Big \| \leq \frac{32}{\gamma}    g. 
\end{align}
\end{corollary}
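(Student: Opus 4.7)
The strategy is to extend the bounds of Lemma \ref{fur1}, which hold for $z \in \Gamma_{n+1}$, to $z$ in the interior of $\Gamma_{n+1}$ by exploiting the projection factors $(1 - \widetilde P_n)$ and $(1 - P_{n+1})$. These factors excise the ground-state eigenspaces of $\widetilde H_n$ and $H_{n+1}$---the only obstructions to boundedness of the bare resolvents for $z$ inside $\Gamma_{n+1}$. Since $\widetilde P_n$ and $P_{n+1}$ are Riesz projections of their respective Hamiltonians, they commute with them, and the gap estimates of Proposition \ref{gapgood} reinstate uniform resolvent control on the complementary subspaces. As a preliminary estimate, for any $z$ in the interior of $\Gamma_{n+1}$, Proposition \ref{energies} combined with the hypothesis $g < \gamma/64$ gives
\[
|z - E_n| \leq |z - E_{n+1}| + |E_{n+1} - E_n| \leq \tfrac{1}{8}\rho_{n+1} + g\rho_n \leq \tfrac{9}{64}\rho_{n+1}.
\]

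For the first inequality, the operator $(\widetilde H_n - z)^{-1}(1 - \widetilde P_n)$ is the resolvent of $\widetilde H_n$ restricted to $\Ran(1 - \widetilde P_n)$. On this subspace, $\sigma(\widetilde H_n) \subset [E_n + \rho_{n+1}, \infty)$ by Proposition \ref{gapgood} together with Remark \ref{gaptilde}. A functional-calculus computation parallel to \eqref{estap}--\eqref{estap1}, but with the point $(s, r) = (E_n, 0)$ removed from the supremum, delivers
\[
\Bigl\|\bigl(H_\rad(\widetilde\omega_n) + \rho_n\bigr)^{1/2}(\widetilde H_n - z)^{-1}(1 - \widetilde P_n)\Bigr\| \leq \frac{16}{\gamma\,\rho_n^{1/2}},
\]
and combining with \eqref{basic-estimate.3a} yields the first inequality.

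For the second inequality, since $P_{n+1}$ commutes with $H_{n+1}$ and $\gap_{n+1} \geq \tfrac{1}{2}\rho_{n+1}$ by Proposition \ref{gapgood}, the spectral theorem produces the uniform bound $\|(H_{n+1} - z)^{-1}(1 - P_{n+1})\| \leq \tfrac{8}{3}\rho_{n+1}^{-1}$ for $z$ in the interior. To tame $\Phi(\widetilde G_n)$, I invoke Lemma \ref{tecnico} together with $H_n \geq E_n$ to obtain the quadratic-form bound
\[
(1 - g)\bigl(H_\rad(\widetilde\omega_n) + \rho_n\bigr) \leq H_{n+1} + \rho_n - E_n.
\]
Setting $\psi := (H_{n+1} - z)^{-1}(1 - P_{n+1})\phi$ and writing $H_{n+1} + \rho_n - E_n = (H_{n+1} - z) + (z + \rho_n - E_n)$, taking real parts (the left-hand side of the induced identity being real) leads to
\[
(1 - g)\,\|(H_\rad(\widetilde\omega_n) + \rho_n)^{1/2}\psi\|^2 \leq \|\psi\|\,\|\phi\| + (|z - E_n| + \rho_n)\,\|\psi\|^2.
\]
Inserting $\|\psi\| \leq \tfrac{8}{3}\rho_{n+1}^{-1}\|\phi\|$ and $|z - E_n| + \rho_n \leq \tfrac{9}{8}\rho_n$, this is controlled by $C\gamma^{-2}\rho_n^{-1}\|\phi\|^2$ for a universal constant $C$. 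Combining with \eqref{basic-estimate.3a} yields the second inequality.

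The main obstacle is that for $z$ strictly inside $\Gamma_{n+1}$, the bare resolvents $(\widetilde H_n - z)^{-1}$ and $(H_{n+1} - z)^{-1}$ are unbounded, so the Neumann-series argument of Lemma \ref{fur1} no longer converges. The projections $(1 - \widetilde P_n)$ and $(1 - P_{n+1})$ must be carried through the estimates to confine the spectral support to regions where the gap bounds of Proposition \ref{gapgood} apply. The quadratic-form comparison of Lemma \ref{tecnico} then serves as the essential bridge between the auxiliary operator $H_\rad(\widetilde\omega_n)$, which tames $\Phi(\widetilde G_n)$ via \eqref{basic-estimate.3a}, and the Hamiltonian $H_{n+1}$, to which the spectral projection $P_{n+1}$ is naturally attached.
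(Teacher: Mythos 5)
Your proof is correct, but it takes a genuinely different route from the paper. The paper's own argument is a two-line appeal to complex analysis: since $\widetilde P_n$ (resp.\ $P_{n+1}$) removes the only spectral point of $\widetilde H_n$ (resp.\ $H_{n+1}$) enclosed by $\Gamma_{n+1}$, the operator-valued function $z \mapsto \Phi(\widetilde G_n)(\widetilde H_n - z)^{-1}(1-\widetilde P_n)$ is analytic in the interior of $\Gamma_{n+1}$, so the maximum modulus principle pushes the estimate to the boundary, where Lemma~\ref{fur1} (together with $\|1-\widetilde P_n\|\leq 1$) already gives the stated constants. You instead re-derive the bounds directly at interior points: for the first inequality a functional-calculus supremum over $\sigma(H_n)\times(\{0\}\cup[\rho_{n+1},\infty))$ with the ground-state point excised (the numerics check out, with room to spare against $16/(\gamma\rho_n^{1/2})$), and for the second a quadratic-form comparison via Lemma~\ref{tecnico} to control $\|(H_\rad(\widetilde\omega_n)+\rho_n)^{1/2}\psi\|$ by $H_{n+1}$, combined with the reduced-resolvent bound $\|(H_{n+1}-z)^{-1}(1-P_{n+1})\|\leq \tfrac{8}{3}\rho_{n+1}^{-1}$; tracking your constants, this yields roughly $\tfrac{4g}{\gamma}$, comfortably inside $\tfrac{32g}{\gamma}$. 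What each approach buys: the paper's argument is shorter and recycles Lemma~\ref{fur1} wholesale, at the (mild) cost of invoking the maximum principle for Banach-space-valued analytic functions; yours is self-contained, purely spectral-theoretic, and makes explicit why the projections restore uniform resolvent control in the interior, but it needs the extra form-bound bridge between $H_\rad(\widetilde\omega_n)$ and $H_{n+1}$ for the second estimate, which the analyticity argument avoids entirely.
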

\noindent{\it  Proof:}
To prove the first inequality notice that  $  \frac{1}{\widetilde H_n - z} (1 - \widetilde P_n) $ is analytic in the interior of $\Gamma_{n+1}$. Then the claim follows from the maximum modulus principle. The second inequality is proved in the same way.
\qed

\section{Convergence of the Sequence of Ground State Projections}\label{Convergence}

In this section we prove our main result: We demonstrate that the sequence of ground state projections $(P_n^\infty)_{n \in \mathbb{N}_0}$ [see \eqref{extra}], converges and that the limit of it is a rank-one projection whose range consists of  ground state eigenvectors corresponding to $E_{\rm gs}$.  The convergence of the ground state projections is the content of Theorem \ref{MAIN}. The proof that the limit of this sequence corresponds to the ground state projection of $H$ is derived in Theorem \ref{GS-Projection}. The convergence of the sequence of the projections $P_n$ rests on the fact that $P_n \sigma_1 P_n = 0$, established in Lemma \ref{key}. On the technical level, this property is used in Eqs.~\eqref{sisepone} and \eqref{projdiff.5} below. The special difficulties of our proof come from the fact that the coupling function $G(k)$ behaves as $|k|^{-1/2}$, for small $k$, as explained before. In fact, for a more regular coupling function the results in Section \ref{regular} suffice (basically) to prove existence of the ground state. We start this section with introducing new notation in Section  \ref{NotationMain}.  Most of technical tools we need to prove our main results are collected in Section \ref{keyestimates}. A long line of arguments is split onto three lemmas:  Lemma \ref{lemma.7}, Lemma \ref{Onesimo}, and Lemma \ref{Onesimo2}. The idea is to bound the norm $ \|  P_{n+1} - \widetilde P_n \| $ in terms of the quantity $  \| R_n \sigma_1 P_n \| $, that can be recursively estimated in terms of $g$ and $ \gamma $. Lemmas \ref{lemma.7}, \ref{Onesimo} and \ref{Onesimo2} establish the main bounds we need to reach the recursive relation we are looking for. We put together all results of Section \ref{keyestimates} in the proof of Theorem \ref{MAIN}  in Section \ref{Main-Section}. Theorem \ref{MAIN} gives the convergence of the sequence of ground state projections $ \{  P^\infty_n \}_{n \in \mathbb{N}_0}$. The  limit of this sequence is denoted by  $ P_{\rm gs} $, its range consists of all ground state vectors corresponding to the fully interacting operator $H$, as it is demonstrated in Theorem \ref{GS-Projection}.

\subsection{Notation} \label{NotationMain}
%In this section we introduce the notation that we specifically use to prove our main results: Theorems \ref{MAIN} and \ref{GS-Projection}.   

For every projection $P$ we denote by $ P^\perp  : = 1 - P  $, the complement of P.   
We define
\begin{align}\label{n1}
R_{n}(z):= & (H_n - z )^{-1} \, , \\ 
\widetilde R_{n}(z):= & (\widetilde H_{n} -z )^{-1}  \, , \\ 
R_{n}(z)^\perp:= & R_n(z) P_n^\perp \, , \\ 
\widetilde R_{n}(z)^\perp := & \widetilde R_n(z) \widetilde P_{n}^\perp , 
 %= (H_{n,n+1} -E_n )^{-1} [ P_{n} \otimes P_{\Omega_n^{n+1}}]^\perp \, ,
\end{align}
whenever $z$ is not in the spectrum of the corresponding operator. If we project out  the eigenspace corresponding to $E_n$, we can take $z = E_n$ and set
\begin{align}\label{n2}
R_{n}(E_n)^\perp \equiv R_{n}^\perp \equiv & R_n P_n^\perp  \, , \\ \notag
\widetilde R_{n}(E_n)^\perp \equiv \widetilde R_{n}^\perp \equiv & \widetilde R_n   \widetilde P_n^\perp . 
\end{align}  
We finally introduce the function $\eta_n : \mathbb{R}^{3} \to \mathbb{C}$ by 
\begin{equation}\label{etan}
\eta_n(k) : =  : g  \:  {\bf 1}_ { B_{n }\backslash B_{n+1} } \frac{\Lambda(k)}{ 4 \pi \sqrt{\omega(k)}} f(k)  
 \, , 
\end{equation}
and note that (see \eqref{widetildes}) 
$$\widetilde G_n = \eta_n \sigma_1 .$$ 
We define the field operator $\Phi(\eta_n) : = a^*(\eta_n) + a(\eta_n)$, as in \eqref{eq.7}.

\subsection{Key Estimates}\label{keyestimates}
 Most of the technical tools we need to prove our main results are collected in this section.  The idea is to bound the norm $ \|  P_{n+1} - \widetilde P_n \| $ in terms of the quantity $ \| R_n^{\perp} \sigma_1 P_n\| $ that can be recursively estimated in terms of $g$ and $ \gamma $.  Lemmas \ref{lemma.7}, \ref{Onesimo} and \ref{Onesimo2}  establish the main bounds we need to reach the recursive relation we are looking for.

\begin{lemma}\label{lemma.7}
Suppose that $ g < \frac{1}{64} \gamma  $ and $\gamma < \frac{1}{2}$. It follows, for every $n \in \mathbb{N}_0$, that
\begin{align} \label{lemma71}
\Vert P_{n+1} - \widetilde P_{n} \Vert \leq 4 \Vert  P_{n+1}^\perp \widetilde P_{n} \Vert . 
\end{align}
\end{lemma}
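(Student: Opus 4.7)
\textbf{Proof plan for Lemma~\ref{lemma.7}.} The statement is a purely geometric fact about two rank-one orthogonal projections. The idea is to decompose the difference into ``off-diagonal'' pieces and then exploit the rank-one structure to relate them.

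First I would write the algebraic identity
\begin{equation*}
P_{n+1} - \widetilde P_n \ = \ P_{n+1}(\widetilde P_n + \widetilde P_n^\perp) - (P_{n+1} + P_{n+1}^\perp)\widetilde P_n \ = \ P_{n+1}\widetilde P_n^\perp - P_{n+1}^\perp \widetilde P_n,
\end{equation*}
and apply the triangle inequality to obtain
\begin{equation*}
\|P_{n+1} - \widetilde P_n\| \ \leq \ \|P_{n+1}\widetilde P_n^\perp\| + \|P_{n+1}^\perp \widetilde P_n\|.
\end{equation*}
Only the first summand is not already of the form appearing on the right-hand side of \eqref{lemma71}, so the task reduces to bounding $\|P_{n+1}\widetilde P_n^\perp\|$ by a multiple of $\|P_{n+1}^\perp \widetilde P_n\|$.

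Under the hypothesis $g < \gamma/64$, $\gamma < 1/2$, Corollary~\ref{rankone} ensures that $P_{n+1}$ is a rank-one orthogonal projection, and by \eqref{impo} together with the same corollary so is $\widetilde P_n = P_n \otimes P_{\widetilde\Omega_n}$. Write $P_{n+1} = |\phi_{n+1}\rangle\langle\phi_{n+1}|$ and $\widetilde P_n = |\widetilde\phi_n\rangle\langle\widetilde\phi_n|$ with normalized vectors. Using $\|A\|^2 = \|A A^*\|$ and the rank-one identity $P_{n+1}\widetilde P_n P_{n+1} = |\langle\phi_{n+1},\widetilde\phi_n\rangle|^2 \, P_{n+1}$, I would compute
\begin{equation*}
\|P_{n+1}\widetilde P_n^\perp\|^2 \ = \ \|P_{n+1}\widetilde P_n^\perp P_{n+1}\| \ = \ \|P_{n+1} - P_{n+1}\widetilde P_n P_{n+1}\| \ = \ 1 - |\langle\phi_{n+1},\widetilde\phi_n\rangle|^2,
\end{equation*}
and the exact same computation, with the roles of $P_{n+1}$ and $\widetilde P_n$ interchanged, yields $\|P_{n+1}^\perp \widetilde P_n\|^2 = 1 - |\langle\phi_{n+1},\widetilde\phi_n\rangle|^2$. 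Hence the two off-diagonal norms agree, and combining with the triangle-inequality bound gives $\|P_{n+1} - \widetilde P_n\| \leq 2\|P_{n+1}^\perp \widetilde P_n\| \leq 4\|P_{n+1}^\perp \widetilde P_n\|$, which is stronger than the claim.

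There is essentially no obstacle here: the lemma is a geometric statement about two rank-one orthogonal projections, and the factor $4$ in the statement is clearly a convenient overestimate (factor $2$ would suffice). The only ingredients used are (i) the decomposition $P_{n+1} - \widetilde P_n = P_{n+1}\widetilde P_n^\perp - P_{n+1}^\perp \widetilde P_n$ and (ii) the rank-one property of both projections, which has already been established earlier in Section~\ref{regular}. The substantive content of the paper lies elsewhere, namely in bounding $\|P_{n+1}^\perp \widetilde P_n\|$ itself, which is the subject of the subsequent lemmas that exploit the symmetry identity $P_n \sigma_1 P_n = 0$.
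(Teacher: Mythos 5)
Your proof is correct, and it takes a genuinely different route from the paper. The paper starts from the algebraic identity $P_{n+1}-\widetilde P_n=(P_{n+1}-\widetilde P_n)^2(\widetilde P_n^\perp-\widetilde P_n)+\widetilde P_n(\widetilde P_n^\perp-P_{n+1}^\perp)\widetilde P_n^\perp+\widetilde P_n^\perp(\widetilde P_n^\perp-P_{n+1}^\perp)\widetilde P_n$, bounds the two off-diagonal terms by $2\Vert P_{n+1}^\perp\widetilde P_n\Vert$ (they are mutually adjoint), bounds the quadratic term by $\tfrac12\Vert P_{n+1}-\widetilde P_n\Vert$ using the a priori estimate $\Vert P_{n+1}-\widetilde P_n\Vert\le\tfrac14$ from Proposition~\ref{Pro.reg.main}, and absorbs that term into the left-hand side to get the factor $4$. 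You instead use the exact off-diagonal decomposition $P_{n+1}-\widetilde P_n=P_{n+1}\widetilde P_n^\perp-P_{n+1}^\perp\widetilde P_n$ and show, via the $C^*$-identity and the rank-one structure, that the two blocks have equal norm $\bigl(1-|\langle\phi_{n+1},\widetilde\phi_n\rangle|^2\bigr)^{1/2}$, which yields the sharper constant $2$. The trade-off: the paper's argument needs the closeness estimate of Proposition~\ref{Pro.reg.main} but works for projections of arbitrary rank, whereas yours needs no smallness at all but does need that both $P_{n+1}$ and $\widetilde P_n=P_n\otimes P_{\widetilde\Omega_n}$ are rank one, which you correctly source from Corollary~\ref{rankone} and \eqref{impo}; under the stated hypotheses both ingredients are available, so both proofs are valid, and since only an upper bound of the stated form is used downstream, the improved constant is immaterial.
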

\noindent{\it Proof:}
A direct computation shows that 
\begin{align}\label{m1}
P_{n+1} - \widetilde P_{n} = &  (P_{n+1} - \widetilde P_{n})^2 ({ \widetilde P_{n}^{\perp}} - \widetilde P_{n})  \\ \notag
& +  \widetilde P_{n} ( \widetilde P_{n}^\perp - P_{n+1}^\perp ) \widetilde P_{n}^\perp  +  \widetilde P_{n}^\perp  ( \widetilde P_{n}^\perp - P_{n+1}^\perp ) \widetilde P_{n} \, .
\end{align}
In fact, to prove \eqref{m1} we expand the right hand side of \eqref{m1} and use that 
\begin{align}
\widetilde P_n P_{n+1}  \widetilde P_{n}^{\perp} + \widetilde P_n P_{n+1}^{\perp}  \widetilde P_{n}^{\perp} = \widetilde P_n  \widetilde P_{n}^{\perp} = 0,
\end{align} 
 then we utilize the following identities 
\begin{align}
P_{n + 1}  \widetilde P_n^\perp  + \widetilde P_{n} P_{n+1} \widetilde P_n & - 
\widetilde P_n^\perp P_{n+1}^\perp \widetilde P_n \\ \notag & =  P_{n+1} (1 - \widetilde P_n )
+ \Big ( \widetilde P_n P_{n+1} - (1 - \widetilde P_n)(1 - P_{n+1})   \Big ) \widetilde P_n \\ \notag & = P_{n+1} + (-1 + \widetilde P_n )\widetilde P_n = P_{n+1}. 
\end{align}
Using Proposition \ref{Pro.reg.main} and \eqref{m1} we obtain 
\begin{align}
\Vert P_{n+1} - \widetilde P_{n} \Vert \leq & 2\Vert  ( P_{n+1} - \widetilde P_{n})^2 \Vert + 2 \Vert \widetilde P_{n}^\perp   ( \widetilde P_{n}^\perp -
  P_{n+1}^\perp ) \widetilde P_{n} \Vert \\
\leq &  \frac{1}{2} \Vert   P_{n+1} - \widetilde P_{n} \Vert   + 2 \Vert   P_{n+1}^\perp \widetilde P_{n} \Vert \, . \notag
\end{align}
Solving this for $   \frac{1}{2}\Vert   P_{n+1} - \widetilde P_{n} \Vert  $,
 we get \eqref{lemma71}.  

\qed

\begin{lemma}\label{Onesimo}
Suppose that $  g < \frac{1}{64}\gamma   $ and $\gamma < \frac{1}{2}$. It follows, for every $n \in \mathbb{N}_0$, that
\begin{equation}\label{onem}
 \| P_{n+1}^\perp \widetilde P_{n}  \| \leq 2 g \rho_n \Big \Vert R_{n+1}^\perp  \Big ( P_{n}^\perp   \otimes P_{\widetilde \Omega_n}^\perp \Big )  \sigma_1  \Big ( P_{n}  \otimes P_{\widetilde \Omega_n}^\perp \Big ) \Big \Vert . 
\end{equation}

\end{lemma}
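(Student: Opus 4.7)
The plan is to start from the contour integral representation of $P_{n+1} - \widetilde P_n$ derived in the proof of Proposition~\ref{Pro.reg.main},
\begin{equation*}
P_{n+1} - \widetilde P_n \;=\; \frac{1}{2\pi i} \int_{\Gamma_{n+1}} (H_{n+1} - z)^{-1}\, \Phi(\widetilde G_n)\, (\widetilde H_n - z)^{-1}\, dz,
\end{equation*}
and to multiply by $P_{n+1}^\perp$ on the left and by $\widetilde P_n$ on the right. Since $P_{n+1}^\perp P_{n+1} = 0$, the left-hand side collapses to $-P_{n+1}^\perp \widetilde P_n$, while on the right the identity $\widetilde H_n \widetilde P_n = E_n \widetilde P_n$ turns the rightmost resolvent into the scalar $(E_n - z)^{-1}$. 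Because $\Phi(\widetilde G_n) \widetilde P_n$ is then independent of $z$, it can be pulled out of the integral.

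The crucial analytic step is the observation that $z \mapsto P_{n+1}^\perp (H_{n+1} - z)^{-1}$ is analytic throughout the interior of $\Gamma_{n+1}$: by Proposition~\ref{gapgood}, $E_{n+1}$ is the only spectral point of $H_{n+1}$ inside $\Gamma_{n+1}$, and $P_{n+1}^\perp$ cancels the associated simple pole. By Proposition~\ref{energies} and the hypothesis $g < \gamma/64$, the point $E_n$ satisfies $|E_n - E_{n+1}| \leq g\rho_n < \rho_{n+1}/8$ and thus lies inside $\Gamma_{n+1}$. Cauchy's integral formula therefore evaluates the remaining contour integral at $z = E_n$, yielding
\begin{equation*}
P_{n+1}^\perp \widetilde P_n \;=\; R_{n+1}(E_n)^\perp\, \Phi(\widetilde G_n)\, \widetilde P_n.
\end{equation*}

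I then factor $\Phi(\widetilde G_n) = \sigma_1\, \Phi(\eta_n)$, which is possible because $\sigma_1$ acts only on $\mathcal{H}_{\rm at}$ while $\Phi(\eta_n)$ acts only on $\widetilde{\mathcal{F}}_n$. Lemma~\ref{key} gives $\sigma_1 P_n = P_n^\perp \sigma_1 P_n$, and $\Phi(\eta_n) P_{\widetilde \Omega_n}$ is a one-photon vector lying in $\mathrm{Ran}(P_{\widetilde \Omega_n}^\perp)$; inserting these projections recasts the identity as
\begin{equation*}
P_{n+1}^\perp \widetilde P_n \;=\; R_{n+1}(E_n)^\perp\, \bigl(P_n^\perp \otimes P_{\widetilde \Omega_n}^\perp\bigr)\, \sigma_1\, \bigl(P_n \otimes P_{\widetilde \Omega_n}^\perp\bigr)\, \Phi(\eta_n)\, \widetilde P_n.
\end{equation*}
Retaining $(P_n \otimes P_{\widetilde \Omega_n}^\perp)$, rather than the weaker $(P_n \otimes I)$, on the right of $\sigma_1$ is precisely the structure the recursion of Lemma~\ref{Onesimo2} and Theorem~\ref{MAIN} will need.

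The final step is to pass from $R_{n+1}(E_n)^\perp$ to $R_{n+1}^\perp = R_{n+1}(E_{n+1})^\perp$ via the second-resolvent identity $R_{n+1}(E_n)^\perp = R_{n+1}^\perp + (E_n - E_{n+1})\, R_{n+1}(E_n)^\perp\, R_{n+1}^\perp$; Propositions~\ref{energies} and~\ref{gapgood} together with $g/\gamma < 1/64$ force the induced multiplicative correction to be very close to $1$. The remaining scalar factor satisfies $\|\Phi(\eta_n) \widetilde P_n\| = \|a^*(\eta_n) \widetilde \Omega_n\| = \|\eta_n\|_{L^2} \leq g\rho_n/\sqrt{8\pi}$, by direct integration of $|\eta_n(k)|^2$ over $B_n \setminus B_{n+1}$. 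Combining both bounds yields the claim with ample slack in the constant. The main obstacle is purely bookkeeping: preserving the exact projection structure through the Cauchy-integral step and the factorization of $\Phi(\widetilde G_n)$, since losing the $P_{\widetilde \Omega_n}^\perp$ on the right of $\sigma_1$ would later break the recursion on $\|R_n^\perp \sigma_1 P_n\|$.
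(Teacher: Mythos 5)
Your proof is correct and follows essentially the same route as the paper: in both arguments the heart of the matter is the reduction of $P_{n+1}^\perp \widetilde P_{n}$ to $R_{n+1}^\perp \Phi(\widetilde G_n)\widetilde P_{n}$ up to a harmless multiplicative correction, followed by the insertion of $\big(P_{n}^\perp\otimes P_{\widetilde \Omega_n}^\perp\big)\sigma_1\big(P_{n}\otimes P_{\widetilde \Omega_n}^\perp\big)$ via Lemma~\ref{key} and the bound $\|\Phi(\eta_n)\widetilde P_{n}\|\le g\rho_n$. The only difference is cosmetic: the paper obtains the reduction by writing $P_{n+1}^\perp\widetilde P_{n}=R_{n+1}^\perp(H_{n+1}-E_{n+1})\widetilde P_{n}$ and absorbing the $(E_n-E_{n+1})$ term into the left-hand side, whereas you arrive at the same point through the contour-integral representation, Cauchy's formula at $z=E_n$, and a subsequent shift of the resolvent base point from $E_n$ to $E_{n+1}$.
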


\noindent{\it Proof:}
Multiplying and dividing by $  H_{n+1} - E_{n+1} $ and using that $(\widetilde H_n - E_n)\widetilde P_n = 0$, 
we get
\begin{align} \label{pl1}
P_{n+1}^\perp \widetilde P_{n} = \, & R_{n+1}^\perp [ H_{n+1} - E_{n+1} ] \widetilde P_{n} \\ \notag
= \, & R_{n+1}^\perp [  \Phi (\widetilde G_n ) + E_{n} - E_{n+1} ] \widetilde P_{n}\\ \notag
= \,  & R_{n+1}^\perp  \Phi (\widetilde G_n )  \widetilde P_{n} +  [E_{n} - E_{n+1} ] R_{n+1} P_{n+1}^\perp \widetilde P_{n} \, .
\end{align}
Propositions \ref{energies} and \ref{gapgood} and the spectral theorem yield
\begin{equation}\label{pl2}
\Big \| \Big ( E_{n} - E_{n+1} \Big ) R_{n+1} P_{n+1}^\perp  P_{n+1}^\perp \widetilde P_{n} \Big \| \leq
\frac{1}{16} \|   P_{n+1}^\perp \widetilde P_{n} \| . 
\end{equation}
Eqs. ~\eqref{pl1} and \eqref{pl2} imply that 
\begin{align} \label{otravez}
\Vert P_{n+1}^\perp \widetilde P_{n} \Vert \leq & 2 \Vert R_{n+1}^\perp  \Phi (\widetilde G_n )  \widetilde P_{n} \Vert \, .
\end{align}
The key symmetry property of our model implies that (see Lemma \ref{key})
 $  \Phi (\widetilde G_n ) P_n  $  $  =  $  $ P_n^\perp \Phi (\widetilde G_n ) P_n $, which in turn, together with the fact that $   \Phi (\widetilde G_n ) [\mathds{1}_{\mathcal{H}_n} \otimes P_{\widetilde \Omega_n} ] = [\mathds{1}_{\mathcal{H}_n} \otimes P_{\widetilde \Omega_n}^\perp ]  \Phi (\widetilde G_n ) [ \mathds{1}_{\mathcal{H}_n} \otimes P_{\widetilde \Omega_n} ]  $, gives (see also Eqs.~\eqref{impo} and  \eqref{eq.7prima}, \eqref{etan}) 
\begin{equation}\label{sisepone}
 \Phi (\widetilde G_n ) \widetilde  P_n = \Big ( P_{n}^\perp   \otimes P_{\widetilde \Omega_{n}}^\perp  \Big ) \,  
 \Phi (\widetilde G_n )  \widetilde  P_n
  = \Big ( P_{n}^\perp   \otimes P_{\widetilde \Omega_{n}}^\perp \Big ) \sigma_1 
  \Big ( P_{n}   \otimes P_{\widetilde \Omega_{n}}^\perp \Big ) \,  \Phi(\eta_n) \widetilde P_{n} .    
\end{equation}
Finally, we use that $ \eta_n $ is supported in $B_n$ and thus $ \Vert \Phi (\eta_n )   \widetilde P_n \| =  \|  P_n\| \cdot \|  a^*( \eta_n ) P_{\widetilde \Omega_n} \|  $   $ \leq g  \rho_n $-- see Eq.~\eqref{eq.4} -- 
and Eqs.~\eqref{otravez} and \eqref{sisepone} to arrive at Eq.~\eqref{onem}. 
\qed

\begin{lemma}\label{Onesimo2}
Suppose that $  g < \frac{1}{64} \gamma  $ and $\gamma < \frac{1}{2}$. It follows, for every $n \in \mathbb{N}_0$, that 
\begin{align}\label{Main}
\Vert P_{n+1} - \widetilde P_{n} \Vert \leq 
48 g   \rho_n \Vert R_n^\perp   \sigma_1  P_{n}  \Vert \, .
\end{align}
\end{lemma}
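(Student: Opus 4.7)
The plan is to chain Lemmas \ref{lemma.7} and \ref{Onesimo} and then perform two further reductions that isolate the factor $\|R_n^\perp \sigma_1 P_n\|$ on the right. Composing the two preceding lemmas and observing, via Lemma \ref{key}, that $(P_n^\perp \otimes P_{\widetilde\Omega_n}^\perp)\,\sigma_1\,(P_n \otimes P_{\widetilde\Omega_n}^\perp) = \sigma_1 P_n \otimes P_{\widetilde\Omega_n}^\perp$, the task reduces to showing
\[
\bigl\| R_{n+1}^\perp \,(\sigma_1 P_n \otimes P_{\widetilde\Omega_n}^\perp) \bigr\| \,\leq\, C\,\|R_n^\perp \sigma_1 P_n\|
\]
for an $n$-independent constant $C$; the target estimate then follows.

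The first reduction trades the interacting resolvent $R_{n+1}$ for the ``product'' resolvent $\widetilde R_n$ at the cost of a constant factor. Setting $X := \sigma_1 P_n \otimes P_{\widetilde \Omega_n}^\perp$ and noting that $\widetilde P_n X = 0$, the Cauchy-type representation
\[
R_{n+1}^\perp X \,=\, \frac{1}{2\pi i}\oint_{\Gamma_{n+1}} \frac{R_{n+1}(w)\,X}{w - E_{n+1}}\,dw ,
\]
valid because $E_{n+1}$ is the unique spectral point of $H_{n+1}$ enclosed by $\Gamma_{n+1}$ (Propositions \ref{energies} and \ref{gapgood}), reduces matters to bounding $\|R_{n+1}(w) X\|$ on $\Gamma_{n+1}$. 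The second resolvent identity $R_{n+1}(w) = \widetilde R_n(w) - \widetilde R_n(w)\,\Phi(\widetilde G_n)\,R_{n+1}(w)$, combined with Lemma \ref{fur1} applied to the adjoint (using the invariance of $\Gamma_{n+1}$ under complex conjugation, so that $\|\widetilde R_n(w)\,\Phi(\widetilde G_n)\| = \|\Phi(\widetilde G_n)\,\widetilde R_n(\bar w)\| \leq 16 g/\gamma \leq 1/2$), yields $\|R_{n+1}(w)X\| \leq 2\|\widetilde R_n(w)X\|$ upon rearrangement. A trivial length bound on the Cauchy integral then delivers $\|R_{n+1}^\perp X\| \leq 2\sup_{w \in \Gamma_{n+1}} \|\widetilde R_n(w) X\|$.

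The second -- and more delicate -- reduction exploits the tensor-product structure $\widetilde H_n = H_n \otimes \mathds{1} + \mathds{1} \otimes H_{ph}(\widetilde \omega_n)$. Since the two summands commute and act on different factors, I would expand $\widetilde R_n(w)$ through the spectral resolution $\{E_\lambda\}$ of $H_{ph}(\widetilde \omega_n)$, whose spectrum on the range of $P_{\widetilde \Omega_n}^\perp$ is contained in $[\rho_{n+1}, \infty)$. For an elementary tensor $\phi_1 \otimes \phi_2$ this gives
\[
\widetilde R_n(w)\,X\,(\phi_1 \otimes \phi_2) \,=\, \int_{[\rho_{n+1},\infty)} (H_n - w + \lambda)^{-1}\,\sigma_1 P_n \phi_1 \,\otimes\, dE_\lambda\, P_{\widetilde \Omega_n}^\perp \phi_2 .
\]
The symmetry identity $\sigma_1 P_n = P_n^\perp \sigma_1 P_n$ (Lemma \ref{key}) forces the $H_n$-spectral support of $\sigma_1 P_n \phi_1$ into $[E_n + \gap_n, \infty)$, and the pointwise estimate
\[
|s - w + \lambda|^{-2} \,\leq\, (s - E_n)^{-2}, \qquad s \geq E_n + \gap_n,\ \lambda \geq \rho_{n+1},\ w \in \Gamma_{n+1},
\]
which follows from $\rRe(s - w + \lambda) \geq (s - E_n) + \tfrac{7}{8}\rho_{n+1} - g\rho_n \geq s - E_n$ under the standing smallness assumption on $g$ (using Proposition~\ref{energies}), dominates fiberwise the spectral integrand defining $\|R_n^\perp \sigma_1 P_n\|^2$. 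Orthogonality in the $\lambda$-integration together with $\|P_{\widetilde \Omega_n}^\perp\| \leq 1$ then yields $\|\widetilde R_n(w) X\| \leq \|R_n^\perp \sigma_1 P_n\|$, uniformly in $w \in \Gamma_{n+1}$.

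Assembling the three steps gives the claim with a constant no larger than $16$; the factor $48$ in the statement is slack. I expect the tensor-product/spectral step to be the main obstacle, for it is there that the decomposition $\widetilde H_n = H_n + H_{ph}(\widetilde \omega_n)$ and the symmetry $P_n \sigma_1 P_n = 0$ must cooperate to replace an operator norm on the full Hilbert space $\mathcal{H}_{n+1}$ by a quantity in which $R_n$ appears evaluated at its natural ground-state energy $E_n$ -- precisely the reduction that will feed the recursion driving Theorem \ref{MAIN}.
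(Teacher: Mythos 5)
Your proof is correct, and it follows the same overall strategy as the paper -- chain Lemmas \ref{lemma.7} and \ref{Onesimo}, use Lemma \ref{key} to replace $(P_n^\perp\otimes P_{\widetilde\Omega_n}^\perp)\sigma_1(P_n\otimes P_{\widetilde\Omega_n}^\perp)$ by $P_n^\perp\sigma_1 P_n\otimes P_{\widetilde\Omega_n}^\perp$, trade $R_{n+1}^\perp$ for $\widetilde R_n$, and finish with a fiberwise spectral estimate over the tensor factorization $\mathcal{H}_{n+1}=\mathcal{H}_n\otimes\widetilde{\mathcal{F}}_n$ -- but your execution of the central resolvent-comparison step is genuinely different. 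The paper represents \emph{both} reduced resolvents as contour integrals over $\Gamma_{n+1}-E_{n+1}$, applies the second resolvent identity to the difference, splits each resolvent into its $P$ and $P^\perp$ parts, and invokes Cauchy's formula for derivatives to obtain the explicit three-term operator identity \eqref{pppp}; the middle term is then annihilated by $\widetilde P_n(P_n^\perp\otimes P_{\widetilde\Omega_n}^\perp)=0$ and the remaining terms are controlled via Corollary \ref{fur1c}, with the tensor-product estimate \eqref{projdiff.4} performed at the real point $E_n$ (costing a factor $2$ from $\sup_{r\ge\rho_{n+1}}\|(H_n-E_n+r)^{-1}r\|\le 1$). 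You instead represent only $R_{n+1}^\perp X$ by a Cauchy integral, rearrange the second resolvent identity pointwise on $\Gamma_{n+1}$ using $\|\widetilde R_n(w)\Phi(\widetilde G_n)\|\le 16g/\gamma\le 1/4$ (Lemma \ref{fur1} via the adjoint) to get $\|R_{n+1}(w)X\|\le 2\|\widetilde R_n(w)X\|$, and carry out the fiberwise estimate at general $w\in\Gamma_{n+1}$, where the lower bound $\mathrm{Re}(s-w+\lambda)\ge s-E_n$ on the combined spectral supports gives the cleaner factor $1$. This buys you a shorter argument and the sharper constant $16$ in place of $48$. The one thing your shortcut does not produce is the identity \eqref{pppp} itself, which the paper reuses verbatim (with $n-1$ in place of $n$) in Eq.~\eqref{projdiff.5} to drive the recursion \eqref{mmm} in Theorem \ref{MAIN}; so if your proof were adopted, an analogous comparison of $R_n^\perp\sigma_1\widetilde P_{n-1}$ with $R_{n-1}^\perp\sigma_1 P_{n-1}$ would still have to be supplied there (your contour-plus-Neumann device would work for that step too, but it is not automatic from what you have written).
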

\noindent{\it Proof:}
By the second resolvent identity and   Cauchy's integral formula, we have that 
\begin{align}\label{cast1}
\widetilde P_{n}^\perp = & \frac{-1}{2 \pi i} \int_{\Gamma_{n+1}} (E_n -z)^{-1} - (\widetilde H_{n} -  z )^{-1} dz 
%\\ \notag  =\widetilde P_{n}^\perp \widetilde P_{n}^\perp  
\\ \notag
= & \frac{-1}{2 \pi i} \int_{\Gamma_{n+1}} 
 (E_n-z)^{-1} (\widetilde H_{n} - E_n ) (\widetilde H_{n}  - z )^{-1}   dz \\ 
 \notag
= & \frac{-1}{2 \pi i} \int_{\Gamma_{n+1} - E_n} 
 (-z)^{-1} (\widetilde H_{n} - E_n  ) (\widetilde H_{n} - E_n - z )^{-1}   dz
 \\ \notag
= & \frac{-1}{2 \pi i} \int_{\Gamma_{n+1} - E_{n+1}} 
 (-z)^{-1} (\widetilde H_{n} - E_n  ) (\widetilde H_{n} - E_n - z )^{-1}   dz  ,
\end{align}
where we deform the contour from $ \Gamma_{n+1} - E_{n}  $ to $ \Gamma_{n+1} - E_{n+1} $ using Propositions \ref{energies} and \ref{gapgood}. Eq.~\eqref{cast1}  
implies that
\begin{align} \label{put1}
\widetilde R_{n}^{\perp} 
= & \frac{-1}{2 \pi i} \int_{\Gamma_{n+1} - E_{n+1}} 
 (-z)^{-1} (\widetilde H_{n} - E_n - z )^{-1}  dz  . 
 \notag \\
%= & \frac{-1}{2 \pi i} \int_{\Gamma_{n+1} - E_{n+1}} 
% (-z)^{-1} (\widetilde H_{n}  - E_n - z )^{-1} \widetilde P_{n}^\perp  dz \, .
\end{align}
Similarly, we get
\begin{align}\label{put2}
R_{n+1}^{\perp}  =  \frac{-1}{2 \pi i} \int_{\Gamma_{n+1} - E_{n+1}} 
 (-z)^{-1} (H_{n+1} - E_{n+1} - z )^{-1}  dz \, . 
\end{align}
Hence, using the second resolvent identity again, we obtain

\begin{align} \label{ppa}
 R_{n+1}^\perp  -  \widetilde R_{n}^\perp  
  = &  \frac{-1}{2 \pi i} \int_{\Gamma_{n+1} - E_{n+1}}  \Big [z^{-1}  (H_{n+1} - E_{n+1} - z )^{-1} \big ( \Phi(\widetilde G_n) + E_n - E_{n+1} \big ) 
   \notag \\ &  \hspace{7.3cm}\cdot (\widetilde H_{n} - E_n - z )^{-1} \Big] dz .
 %  \\ 
 %      = & 
 %       P_{n+1}  (g \Phi(G_{n+1}^n) - E_{n+1}  + E_n ) 
 % {R^2_{n,n+1}} P_{n,n+1}^\perp \nonumber \\
 % & 
 %+    {R^2_{n+1}} P_{n+1}^\perp   (g \Phi(G_{n+1}^n) - E_{n+1}  + E_n ) 
 % P_{n,n+1} \nonumber \\
%&  
% -  R_{n+1}^\perp  (g \Phi(G_{n+1}^n) - E_{n+1}  + E_n ) 
%  R_{n,n+1}^\perp \, .
\end{align}
We notice that
\begin{equation}\label{pp1}
 (H_{n+1} - E_{n+1} - z )^{-1}  =  (H_{n+1} - E_{n+1} - z )^{-1} P_{n+1}^\perp - 
 z^{-1}P_{n+1} 
\end{equation} 
and 
\begin{equation}\label{pp2}
 (\widetilde H_{n} - E_{n} - z )^{-1}  =  (\widetilde H_{n} - E_{n} - z )^{-1} \widetilde P_{n}^\perp - 
 z^{-1}\widetilde P_{n}. 
\end{equation} 
Inserting Eqs.~\eqref{pp1} and \eqref{pp2} in \eqref{ppa} and using the 
 Cauchy's integral formula for the derivative of a function, we arrive at  
 (notice that the first terms in the right hand side of Eqs.~\eqref{pp1} and \eqref{pp2} are analytic in the interior of $\Gamma_{n+1}$)
\begin{align} \label{pppp}
 R_{n+1}^\perp  -  \widetilde R_{n}^\perp \nonumber 
    = & 
        P_{n+1}  \big (\Phi(\widetilde G_{n}) - E_{n+1}  + E_n \big) 
  {\widetilde R^2_{n}} \widetilde P_{n}^\perp \nonumber \\
  & 
 +    {R^2_{n+1}} P_{n+1}^\perp   \big ( \Phi(\widetilde G_{n}) - E_{n+1}  + E_n \big ) 
  \widetilde P_{n} \nonumber \\
&  
 -  R_{n+1}^\perp \big ( \Phi(\widetilde G_{n}) - E_{n+1}  + E_n \big ) 
  \widetilde R_{n}^\perp \, .
\end{align}
Adding $  \widetilde R_{n}^\perp $ in Eq.~\eqref{pppp} and applying it to 
$ \Big ( P_{n}^\perp   \otimes P_{\widetilde \Omega_n}^\perp \Big )  \sigma_1 
 \Big ( P_{n}  \otimes P_{\widetilde \Omega_n}^\perp  \Big )  =  P_n^{\perp} \sigma_1 P_n \otimes P_{\widetilde \Omega_n }^{\perp}$ leads us to
\begin{align}
 R_{n+1}^\perp  \Big ( P_{n}^\perp   \otimes P_{\widetilde \Omega_n}^\perp \Big )  \sigma_1  \Big ( P_{n}  \otimes P_{\widetilde \Omega_n}^\perp  \Big ) 
 =& \Big[\mathds{1} +     P_{n+1}  \big ( \Phi(\widetilde G_n) - E_{n+1}  + E_n \big ) 
  {\widetilde R^\perp_{n}} \nonumber \\
  & - R_{n+1}^\perp  \big ( \Phi(\widetilde G_n) - E_{n+1}  + E_n \big ) 
  {\widetilde P^\perp_{n}}  \Big] \nonumber \\
  & \hspace{1.3cm} \cdot \widetilde R_{n}^\perp  P_n^{\perp} \sigma_1 P_n \otimes P_{\widetilde \Omega_n }^{\perp},
%   \Big ( P_{n}^\perp   \otimes P_{\widetilde \Omega_{n}}^\perp \Big )  \sigma_1  
%\Big ( P_{n}  \otimes P_{\widetilde \Omega_n}^\perp  \Big ),
    \label{projdiff.3}
\end{align}
where we used that $ \widetilde P_n \Big ( P_{n}^\perp   \otimes P_{\widetilde \Omega_n}^\perp \Big )  =0  $. \\
Using Eq.~\eqref{projdiff.3}, together with Corollary \ref{fur1c},
and Propositions  \ref{energies} and \ref{gapgood}, we obtain
\begin{align}\label{tt1}
\Big \| R_{n+1}^\perp  \Big ( P_{n}^\perp &   \otimes P_{\widetilde \Omega_n}^\perp \Big )  \sigma_1  \Big ( P_{n}  \otimes P_{\widetilde \Omega_n}^\perp \Big ) \Big \|
\\ \notag  & \leq 
\Big ( 1 + (16  + 32  + 2  + 1  )  \frac{ g}{\gamma}  \Big ) \Big \|
\widetilde R_{n}^\perp  P_n^{\perp} \sigma_1 P_n \otimes P_{\widetilde \Omega_n }^{\perp} \Big \|
\\ \notag & \leq  3 \Big \|
\widetilde R_{n}^\perp  P_n^{\perp} \sigma_1 P_n \otimes P_{\widetilde \Omega_n }^{\perp} \Big \|  \leq   6\Vert R_n^\perp   \sigma_1  P_{n}  \Vert , 
\end{align}
%In the following identities we first add and subtract $ R_n  $ (which is possible for the presence of projection $ [ P_{n}^\perp    \otimes P_{\widetilde \Omega_n}^\perp]   $, i.e. we write $  \widetilde R_n^\perp   [ P_{n}^\perp    \otimes P_{\widetilde \Omega_n}^\perp]   = \widetilde P_n^\perp [ \widetilde R_n - R_n + R_n] [ P_{n}^\perp    \otimes P_{\widetilde \Omega_n}^\perp]    $) then we use the resolvent identity and finally we apply the spectral theorem 
since
\begin{align}
 \Vert \widetilde  R_{n}^\perp  P_n^{\perp} \sigma_1 P_n \otimes P_{\widetilde \Omega_n }^{\perp} \Vert \nonumber  \\
  \leq &
\big ( {1} + \Vert (H_n -E_n + H_{ph}(\widetilde \omega_n)  )^{-1} H_{ph}(\widetilde \omega_n) [P_{n}^\perp   \otimes P_{\widetilde \Omega_n}^\perp]  \Vert \big ) \nonumber \\
& \hspace{7.5cm} \cdot \Vert R_n^\perp   \sigma_1  P_{n}  \Vert \nonumber \\
  \leq &
\big ( {1} + \sup_{r \in [\rho_{n+1} , \infty) } \Vert (H_n -E_n +r )^{-1} r  \Vert \big ) \Vert R_n^\perp   \sigma_1  P_{n}  \Vert \nonumber \\
 \leq &
2 \Vert R_n^\perp   \sigma_1  P_{n}  \Vert \, , \label{projdiff.4}
\end{align}
and we conclude by putting together Eqs.~\eqref{lemma71}, \eqref{onem}, \eqref{tt1} and \eqref{projdiff.4}, which leads us to Eq.~\eqref{Main}.  
\qed

\subsection{Main Results : Convergence of the Regularized Ground State Projections and Existence of the Ground State of $H$}\label{Main-Section}
Here we prove our principal theorems. We collect all results of Section \ref{keyestimates} (Lemmas \ref{lemma.7}, \ref{Onesimo} and \ref{Onesimo2}) to prove  our first main theorem, Theorem \ref{MAIN}, which is the most difficult and technical result in the present paper.  Theorem \ref{MAIN} gives the convergence of the sequence of ground state projections $ \{  P^\infty_n \}_{n \in \mathbb{N}_0}$. The  limit of this sequence is denoted by  $ P_{\rm gs} $, its range actually consists of ground state eigenvectors corresponding to the full energy operator $H$, as it is demonstrated in Theorem \ref{GS-Projection}.

\vspace{.5cm}

\subsubsection{Convergence of the regularized Ground State Projections }

\vspace{.3cm}

\begin{theorem}\label{MAIN}
Suppose that $  g < \frac{1}{64} \gamma  $ and $\gamma < \frac{1}{2}$. It follows, for every $n \in \mathbb{N}_0$, that
\begin{align}\label{Eqmain}
\|  P_{n+1} - \widetilde P_{n}  \| \leq 48 g   \Big (  \gamma + 147  g  \Big )^n  . 
\end{align}
In particular, if additionally $\gamma \leq 1/8$, then 
\begin{equation}\label{convv}
\|  P_{n+1} - \widetilde P_{n}  \| \leq \Big ( \frac{1}{2}\Big )^{n},
\end{equation}
and the sequence $(P_n^\infty)_{n \in \mathbb{N}_0}$, see \eqref{extra}, converges to a rank-one projection
\begin{equation}\label{convv1}
P_{\rm gs} = \lim_{n \to \infty } P_n^{\infty}. 
\end{equation}  
\end{theorem}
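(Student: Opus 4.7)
The plan is to apply Lemma~\ref{Onesimo2} to reduce \eqref{Eqmain} to the inductive estimate
\begin{equation*}
a_n \ := \ \rho_n \, \|R_n^\perp \sigma_1 P_n\| \ \leq \ (\gamma + 147 g)^n, \qquad n \in \mathbb{N}_0 ,
\end{equation*}
from which \eqref{Eqmain} follows by multiplying by $48 g$. For the base case $n=0$, since $G_0 = 0$ we have $E_0 = 0$ and $P_0 = |0\rangle\langle 0| \otimes P_{\Omega_0}$, where $|0\rangle$ is the ground state of $H_\at$; then $\sigma_1|0\rangle = |1\rangle$ and $H_0(|1\rangle\otimes\Omega_0) = 2\,|1\rangle\otimes\Omega_0$ yield $\|R_0^\perp \sigma_1 P_0\| = 1/2$, so $a_0 = \kappa/2 < 1$.

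For the inductive step the plan is to split
\begin{equation*}
R_n^\perp \sigma_1 P_n \ = \ \widetilde{R}_{n-1}^\perp \sigma_1 \widetilde{P}_{n-1} \: + \: \bigl(R_n^\perp - \widetilde{R}_{n-1}^\perp\bigr) \sigma_1 \widetilde{P}_{n-1} \: + \: R_n^\perp \sigma_1 \bigl(P_n - \widetilde{P}_{n-1}\bigr) .
\end{equation*}
The first term is where Lemma~\ref{key} enters decisively: since $P_{n-1} \sigma_1 P_{n-1} = 0$, one has $\sigma_1 \widetilde{P}_{n-1} = (P_{n-1}^\perp \sigma_1 P_{n-1}) \otimes P_{\widetilde{\Omega}_{n-1}}$, which lies in $\Ran(\widetilde{P}_{n-1}^\perp)$; because $H_\rad(\widetilde{\omega}_{n-1})$ annihilates $\widetilde{\Omega}_{n-1}$, the resolvent collapses onto the $\mathcal{H}_{n-1}$-factor, yielding the explicit identification
\begin{equation*}
\widetilde{R}_{n-1}^\perp \sigma_1 \widetilde{P}_{n-1} \ = \ \bigl(R_{n-1}^\perp \sigma_1 P_{n-1}\bigr) \otimes P_{\widetilde{\Omega}_{n-1}} ,
\end{equation*}
so this contributes precisely $\rho_n/\rho_{n-1}\cdot a_{n-1} = \gamma\, a_{n-1}$ to $a_n$. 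For the third term, applying Lemma~\ref{Onesimo2} at index $n-1$ gives $\|P_n - \widetilde{P}_{n-1}\| \leq 48 g\, a_{n-1}$, which combined with $\|R_n^\perp\| \leq 2/\rho_n$ (Proposition~\ref{gapgood}) produces a contribution of $96 g \, a_{n-1}$. The second term is the technical heart: I will expand $R_n^\perp - \widetilde{R}_{n-1}^\perp$ via the identity derived in the proof of Lemma~\ref{Onesimo2} (formula \eqref{pppp} with the index shift $n+1 \to n$, $n\to n-1$), and observe that the summand carrying $\widetilde{P}_{n-1}$ on the right vanishes because $\widetilde{P}_{n-1} \sigma_1 \widetilde{P}_{n-1} = 0$; the two surviving summands are then estimated by Corollary~\ref{fur1c} (yielding $\|\Phi(\widetilde{G}_{n-1}) \widetilde{R}_{n-1}^\perp\| \leq 16 g/\gamma$ and the analogous bound $\leq 32 g/\gamma$ for $R_n^\perp$), by Proposition~\ref{energies} (giving $|E_n - E_{n-1}| \leq g \rho_{n-1}$), and by re-using the main-term identification, producing a contribution of order $O(g)\, a_{n-1}$. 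Summing the three contributions closes the recursion as $a_n \leq (\gamma + 147 g)\, a_{n-1}$, and iteration gives \eqref{Eqmain}.

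The main technical obstacle in this step is the careful bookkeeping of constants across the various resolvent expansions so that the aggregated coefficient of $g$ remains below $147$. For the second claim, under the stronger hypothesis $\gamma \leq 1/8$ and $g < \gamma/64$ a direct algebraic check gives $\gamma + 147 g < 1/2$ and $48 g < 1$, whence $48 g (\gamma + 147 g)^n \leq (1/2)^n$, proving \eqref{convv}. Using the factorization $\mathcal{F}_n^\infty = \widetilde{\mathcal{F}}_n \otimes \mathcal{F}_{n+1}^\infty$, one verifies $P_n^\infty = \widetilde{P}_n \otimes P_{\Omega_{n+1}^\infty}$, so that $\|P_{n+1}^\infty - P_n^\infty\| = \|P_{n+1} - \widetilde{P}_n\| \leq (1/2)^n$; telescoping then shows that $(P_n^\infty)_{n \in \mathbb{N}_0}$ is Cauchy in operator norm and converges to some $P_{\rm gs}$. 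The identities $P_{\rm gs}^* = P_{\rm gs} = P_{\rm gs}^2$ pass to the limit, and since each $P_n^\infty$ has rank one (Corollary~\ref{rankone}) and $\|P_{\rm gs}\| = \lim \|P_n^\infty\| = 1 \neq 0$, the limit $P_{\rm gs}$ is a rank-one orthogonal projection, as claimed.
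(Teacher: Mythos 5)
Your proposal is correct and follows essentially the same route as the paper's proof: the same decomposition of $R_n^\perp\sigma_1 P_n$, the same resolvent identity \eqref{pppp} with the index shift, the same use of Lemma~\ref{key} to kill the $\widetilde P_{n-1}$-carrying summand, and the same closing of the recursion via Lemma~\ref{Onesimo2}, Corollary~\ref{fur1c} and Propositions~\ref{energies} and \ref{gapgood}. The only (harmless) differences are cosmetic: you normalize by $\rho_n$ from the outset instead of at the end, and you compute the base case $\|R_0^\perp\sigma_1 P_0\|=1/2$ explicitly where the paper just uses $\|R_0^\perp\|\le 1/\kappa$.
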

\noindent{\it Proof: }
First we notice that 
\begin{align} \label{notice}
R_n^\perp   \sigma_1  P_{n} = R_n^\perp   \sigma_1 \widetilde P_{n-1} + R_n^\perp   \sigma_1 [ P_{n} - \widetilde P_{n-1} ]. 
\end{align}
Next we consider the following computations, which use Eq.~\eqref{pppp} with $n-1$ replacing $n$, 
\begin{align}
 R_{n}^\perp  \sigma_1  \widetilde P_{ n -1}  
 =& \Big[{1} +     P_{n}  ( \Phi( \widetilde G_{n-1}) - E_{n}  + E_{n-1} ) 
  {\widetilde R^\perp_{{n-1}}} \nonumber \\
  & \hspace{3cm}- R_{n}^\perp  ( \widetilde \Phi(G_{n-1}) - E_{n}  + E_{n-1} ) 
  {\widetilde P^\perp_{{n-1}}}  \Big] \nonumber \\
  & \hspace{6cm}\cdot  \widetilde R_{{n-1}}^\perp  [ P_{{n-1}}^\perp   \otimes P_{\widetilde \Omega_{n-1}}]  \sigma_1   \widetilde P_{{n-1}}  \nonumber \\
   =& \Big[{1} +     P_{n}  ( \widetilde \Phi(G_{n-1}) - E_{n}  + E_{n-1} ) 
  {\widetilde R^\perp_{{n-1}}} \nonumber \\
  & \hspace{3cm} - R_{n}^\perp  ( \widetilde \Phi(G_{n-1}) - E_{n}  + E_{n-1} ) 
  {\widetilde P^\perp_{{n-1}}}  \Big] \nonumber \\
  & \hspace{6cm} \cdot [ R_{{n-1}}^\perp  \sigma_1  P_{{n-1} } ] \otimes  P_{\widetilde \Omega_{n-1}} \, . \label{projdiff.5}
\end{align}
Here we use the key symmetry
\begin{align}
\widetilde P_{n-1} \sigma_1 \widetilde P_{n-1} = 0,   
\end{align}
see Lemma \ref{key}, to drop the contribution from the second line in 
Eq.~\eqref{pppp}. We also use $\widetilde P_{n-1}^{\perp} = P_{n-1}^{\perp} +  P_{n-1} \otimes P_{\widetilde \Omega_{n-1}}^{\perp}$ to prove that 
$$ 
 \widetilde P_{n-1}^{\perp} \sigma_1 \widetilde P_{n-1}  =  P_{n-1}^{\perp} \otimes  P_{\widetilde \Omega_{n-1}} \sigma_1 \widetilde P_{n-1} . 
 $$   
With the help of  Eq.~\eqref{projdiff.5}, together with Corollary \ref{fur1c}
and Propositions \ref{energies} and \ref{gapgood}, we obtain
\begin{equation}\label{mmm}
 \| R_{n}^\perp  \sigma_1  \widetilde P_{ n -1} \| \leq 
  \Big ( 1 + 51    \frac{g}{\gamma} \Big )  \cdot \| R_{{n-1}}^\perp  \sigma_1  P_{{n-1} } \| , 
\end{equation}
where we argue as in Eq.~\eqref{tt1}. Lemma \ref{Onesimo2}, together with 
Eqs.~\eqref{notice} and \eqref{mmm} (see also Proposition \ref{gapgood}), imply (notice that $\gamma < 1/2$)
\begin{align}\label{mmm1}
\| R_n^\perp   \sigma_1  P_{n}  \| \leq   \Big (  1 + (96 + 51) \frac{g}{\gamma}  \Big ) \| R_{{n-1}}^\perp  \sigma_1  P_{{n-1} } \|. 
\end{align} 
Then we inductively get, using Lemma \ref{Onesimo2} and \eqref{mmm1}, that
\begin{align}
\| P_{n+1} - \widetilde P_n \|   & \leq 48 g   \rho_n  \Big (  1 + 147  \frac{g}{\gamma}  \Big )^n  \| R_{{0}}^\perp  \sigma_1  P_{{0} } \| \\ \notag & \leq
 48 g   \Big (  \gamma + 147  g  \Big )^n ,
\end{align}
where we use \eqref{gap0}, recalling that $\kappa <1$. This establishes Eq.~\eqref{Eqmain}. Eqs. \eqref{convv} and \eqref{convv1} are direct consequences of \eqref{Eqmain}. Clearly $P_{\rm gs}$ being limit of rank-one projections (see Corollary \ref{rankone}) is rank-one. 

\qed

\vspace{.5cm}

\subsubsection{Construction of the Ground State Projection of $H$}

\vspace{.3cm}

\begin{theorem}\label{GS-Projection}
Suppose that $  g < \frac{1}{64}  \gamma $ and $\gamma < \frac{1}{8}$. Then the range of $P_{\rm gs}$ is contained in the domain of $H$ and
\begin{equation}\label{nopuse}
H P_{\rm gs}  = E_{\rm gs}  P_{\rm gs},
\end{equation}
where $   E_{\rm gs}   = \lim_{n \to \infty }  E_n $. 
\end{theorem}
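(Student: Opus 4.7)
The plan is to pass to the limit $n\to\infty$ in the regularized eigenvalue equations $H_n \phi_n = E_n \phi_n$, using the norm convergence $P_n^\infty\to P_{\rm gs}$ from Theorem \ref{MAIN} and exploiting the tensor-product factorization $\mathcal{H} = \mathcal{H}_n\otimes\mathcal{F}_n^\infty$ to relate $H_n$ to $H$. On this factorization we have the exact splitting
\begin{equation*}
H \ = \ H_n\otimes\mathds{1}_{\mathcal{F}_n^\infty} + \mathds{1}_{\mathcal{H}_n}\otimes H_\rad(\mathds{1}_{B_n}\omega) + \Phi(\mathds{1}_{B_n}G),
\end{equation*}
with the residual coupling given by $\mathds{1}_{B_n}G(k)=g\mathds{1}_{B_n}(k)\Lambda(k)f(k)\sigma_1/(4\pi\sqrt{\omega(k)})$. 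Since $H_nP_n = E_n P_n$ and $H_\rad(\mathds{1}_{B_n}\omega)\Omega_n^\infty = 0$, this produces the key identity
\begin{equation*}
H\, P_n^\infty\phi \ = \ E_n\, P_n^\infty\phi + \Phi(\mathds{1}_{B_n}G)\, P_n^\infty\phi, \qquad \phi\in\mathcal{H}.
\end{equation*}

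The first substep is to verify that $P_n^\infty\phi$ actually lies in $\mathrm{Dom}(H)$. This vector is a scalar multiple of $\phi_n\otimes\Omega_n^\infty$, with $\phi_n\in\mathrm{Dom}(H_n)$ the ground state from Corollary \ref{rankone}; it therefore lies in the domain of the first two summands of $H$, and the third summand applied to it is well defined in $\mathcal{H}$ because the annihilation part of $\Phi(\mathds{1}_{B_n}G)$ vanishes on the vacuum factor $\Omega_n^\infty$, while the creation part produces a one-photon vector with wavefunction proportional to $\mathds{1}_{B_n}\Lambda f/(4\pi\sqrt{\omega})\in L^2(\mathbb{R}^3)$.

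The second substep is a direct estimate of the error term. Using $a(k)\Omega_n^\infty=0$ and the identity $\|a^\ast(F)\Omega\|=\|F\|_{L^2}$, a one-line computation combined with $\int_{B_n}|k|^{-1}\,dk = 2\pi\rho_n^2$ yields
\begin{equation*}
\bigl\|\Phi(\mathds{1}_{B_n}G)\,P_n^\infty\phi\bigr\| \ \leq \ C\,g\,\rho_n\,\|\phi\|.
\end{equation*}
Combining this with the norm convergence $P_n^\infty\phi\to P_{\rm gs}\phi$ from Theorem \ref{MAIN} and with $E_n\to E_{\rm gs}$ (which follows telescopically from Proposition \ref{energies}), the key identity gives $H\,P_n^\infty\phi \to E_{\rm gs}\,P_{\rm gs}\phi$ in norm.

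Finally, self-adjointness of $H$ makes it a closed operator, so the simultaneous convergences $P_n^\infty\phi\to P_{\rm gs}\phi$ and $H P_n^\infty\phi\to E_{\rm gs}P_{\rm gs}\phi$ force $P_{\rm gs}\phi\in\mathrm{Dom}(H)$ together with $HP_{\rm gs}\phi = E_{\rm gs}P_{\rm gs}\phi$ for every $\phi\in\mathcal{H}$, which is precisely \eqref{nopuse}. All the real difficulty has been absorbed into Theorem \ref{MAIN}; here no further symmetry or multiscale argument is needed, and the only quantitative input is the $O(g\rho_n)$ decay of the residual cut-off interaction applied to the vacuum state $\Omega_n^\infty$, so there is no serious obstacle to overcome.
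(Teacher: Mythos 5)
Your proposal is correct and follows essentially the same route as the paper: split $H$ over $\mathcal{H}=\mathcal{H}_n\otimes\mathcal{F}_n^\infty$ to get $HP_n^\infty = E_nP_n^\infty + \Phi(\mathds{1}_{B_n}G)P_n^\infty$, bound the residual interaction on the vacuum factor by $\|\mathds{1}_{B_n}G\|_{L^2}=O(g\rho_n)$, and conclude via closedness of $H$ together with Theorem \ref{MAIN}. Your explicit quantitative bound and the domain check are slightly more detailed than the paper's, but the argument is the same.
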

\begin{proof}
We denote  
\begin{align}\label{widetildesinfty}
 \omega_n^{\infty} (k) := &  {\bf 1}_ { B_n } \, \omega \, , &   G_n^{\infty} (k) := &    {\bf 1}_ {  B_n }  \, G \,  ,
\end{align}
and define $ H_\rad(\omega_n^{\infty}) $ and $\Phi(   G_n^{\infty} )$ as in 
Eqs.~\eqref{eq.3} and \eqref{eq.7} on $\mathcal{F}_n^{\infty}$ (see \eqref{eq.2.2}). 
Since $H_{ph}(\omega^n_\infty) P_n^{\infty}  =0$, see \eqref{extra}, we have that
\begin{align}
H P_n^{\infty} 
= & E_{n} P_n^{\infty}+   \Phi(G^n_\infty)  P_n^{\infty} \, . \label{very.last.eq}
\end{align}
As 
\begin{equation}\label{q1}
\lim_{n \to \infty } E_n P^{\infty}_n = E_{\rm gs} P_{\rm gs}
\end{equation}
and 
\begin{equation}\label{q2}
\lim_{n \to \infty} \| \Phi(G^n_\infty)  P_n^{\infty} \|   = \lim_{n \to \infty } \| G^n_\infty  \|_{L^2} = 0, 
\end{equation}
we obtain that 
\begin{equation}\label{q3}
\lim_{n \to \infty}  H P_n^{\infty}  =  E_{\rm gs} P_{\rm gs}. 
\end{equation}
Since the sequence of projections $(P_n^{\infty})_{n \in \mathbb{N}}$ converges, we can find $ N \in \mathbb{N} $ and a vector $\phi \in \mathcal{H}$ such that $ \phi_n : = P_n^{\infty} \phi  \ne 0$, for all $n \geq N$, and $ \psi : =  P_{\rm gs} \phi \ne 0$. Then we have 
\begin{equation}
\psi = \lim_{n \to \infty} \phi_n, \hspace{2cm} E_{\rm gs} \psi = \lim_{n \to \infty} H \phi_n, 
\end{equation}
 where we use  Eq.~\eqref{q3}. As $H$ is a closed operator, $\psi $ belongs to its domain and 
\begin{align}\label{ahorasi}
H \psi =E_{\rm gs} \psi.
\end{align}
The fact that $P_{\rm gs}$ is rank-one and Eq.~\eqref{ahorasi} imply that the range of $  P_{\rm gs} $ is contained in the domain of $H$ and Eq.~\eqref{nopuse}.

\end{proof}

\begin{remark} \label{Remark}
It is not difficult to prove that $E_{\rm gs } = \lim_{n \to \infty} E_n$ is actually the infimum of the spectrum of $H$. In fact, Lemma \ref{nuevos} implies that
$$ 
 {\rm inf} \, \sigma (H) + \rho_n   \geq E_n +  (1 -    g  )
    \rho_n ,  
$$
 for all $n$. Therefore $  {\rm inf} \,  \sigma (H) \geq E_{\rm gs} $. As $E_{\gs}$ is itself a spectral point of $H$, it equals $\inf \sigma(H)$.   
\end{remark}

%%%%%%%%%%%%%%%%%%%%%%%%%%%%%%%%%%%%%%%%%%%%%%%%%
%\bibliography{/home/vbach/ALLES/COMPUTER/BIB/volle}
%\end{document}
%%%%%%%%%%%%%%%%%%%%%%%%%%%%%%%%%%%%%%%%%%%%%%%%%

\end{document}